\documentclass[submission,copyright,creativecommons]{eptcs}
\providecommand{\event}{AFL 2026} 

\usepackage{iftex, amsthm, amsmath, amsfonts, enumerate, circuitikz, adjustbox, todonotes, breakurl, amssymb, tikz-cd, caption}

\DeclareMathOperator{\symb}{symb}
\DeclareMathOperator{\pref}{pref}
\DeclareMathOperator{\suff}{suff}
\DeclareMathOperator{\rank}{rank}
\newtheorem{theorem}{Theorem}
\newtheorem{proposition}{Proposition}

\newtheorem{corollary}{Corollary}
\newtheorem{definition}{Definition}
\newtheorem{example}{Example}

\title{Irreducibility of Endomorphisms of Finitely Generated Free Semigroups}
\author{Paul C. Bell
\institute{School of Computer Science and Mathematics \\ Liverpool John Moores University \\ Liverpool, UK}
\email{p.c.bell@ljmu.ac.uk}
\and
Eva Foster
\institute{School of Computing and Mathematical Sciences \\ Birkbeck, University of London\\ London, UK}
\email{\quad efoste07@student.bbk.ac.uk}
\and
Daniel Reidenbach
\institute{School of Computing and Mathematical Sciences \\ Birkbeck, University of London\\ London, UK}
\email{\quad d.reidenbach@bbk.ac.uk}
}
\def\titlerunning{Irreducibility of Endomorphisms of Finitely Generated Free Semigroups}
\def\authorrunning{P.~C.~Bell, E.~Foster \& D.~Reidenbach}
\begin{document}
\maketitle
\begin{abstract}
   We introduce and investigate the irreducibility of endomorphisms of finitely generated free semigroups, i.e., we investigate when an endomorphism $\varphi: \Sigma^+ \to \Sigma^+$, where $\Sigma$ is any alphabet, can be nontrivially expressed as a composition $\varphi = \psi_2 \circ \psi_1$ of endomorphisms $\psi_1, \psi_2: \Sigma^+ \to \Sigma^+$. We, hence, study a notion of primality in the endomorphism monoid of the free semigroup---a natural and fundamental concept in this algebraic structure. We establish that irreducibility is a nontrivial property for the class of so-called rank-preserving endomorphisms, and we provide a characteristic condition separating the reducible and irreducible endomorphisms. We also characterise when an endomorphism is a factor of another endomorphism, analyse the non-uniqueness of factorisations of a rank-preserving endomorphism into its irreducible components, and investigate the use of incidence matrices to give insights into the (ir-)reducibility of rank-preserving endomorphisms. 
\end{abstract}

\section{Introduction}
This paper investigates the composition $\psi_2\circ \psi_1$ of two endomorphisms $\psi_1, \psi_2:\Sigma^+\rightarrow \Sigma^+$, defined as $(\psi_2\circ \psi_1)(w)=\psi_2(\psi_1(w))$ for all words $w\in\Sigma^+$. Our primary research question is whether a given endomorphism $\varphi$ can be nontrivially \emph{factorised}, i.\,e., whether there exist endomorphisms $\psi_1, \psi_2$ satisfying $\varphi=\psi_2\circ \psi_1$. As can be easily demonstrated, this question is particularly interesting for what we refer to as the \emph{rank-preserving} endomorphisms, that is, endomorphisms $\varphi:\Sigma^+\rightarrow\Sigma^+$ for which there exists no set $V\subset\Sigma^+$ with $|V|<|\Sigma|$ such that $\varphi(w)\in V^+$ for all $w\in\Sigma^+$. A rank-preserving endomorphism $\varphi$ is \emph{irreducible} if the equality $\varphi=\psi_2\circ \psi_1$, for rank-preserving endomorphisms $\psi_1, \psi_2$, implies that $\psi_1$ or $\psi_2$ is an automorphism, i.\,e., a renaming. Irreducibility provides a natural notion of primality in the endomorphism monoid of the free semigroup, similar to irreducibility in other algebraic structures.\par
Composition and decomposition of morphisms have been studied in various forms. Ehrenfeucht and Rozenberg~\cite{Ehr78} explore the notion of \emph{simplifiable} and \emph{elementary} homomorphisms, where a morphism $\varphi:\Sigma_1^+\rightarrow\Sigma_3^+$ is simplifiable if it can be expressed as $\varphi=\psi_2\circ \psi_1$ where $\psi_1:\Sigma_1^+\rightarrow\Sigma_2^+$, $\psi_2:\Sigma_2^+\rightarrow\Sigma_3^+$ and $|\Sigma_2|<|\Sigma_1|$. Their work proves that it is decidable to determine if a given morphism is elementary, but does not investigate decompositions without an alphabet reduction (in particular, we note that the rank-reducing endomorphisms are exactly the simplifiable morphisms), nor does it investigate the non-uniqueness of factorisations. Similarly, Berstel, Perrin and Reutenauer~\cite{Reu09} investigate the notion of \emph{indecomposable codes}. In contrast to Ehrenfeucht and Rozenberg, their theory admits an equivalent to prime factorisation under composition, though similarly they do not specify that all alphabets considered must coincide.\par
In the present paper, our restriction to rank-preserving endomorphisms means that the domain, range and all intermediate alphabets in a chain of compositions of endomorphisms coincide, allowing us to consider these chains of compositions more naturally. We answer key questions in this setting and provide, among other things, characterisations of both irreducible rank-preserving endomorphisms under this constraint and when an endomorphism is a factor of another endomorphism, and an exploration of when complete factorisations of endomorphisms into irreducible components are unique.\par
The paper is organised as follows. Section \ref{sec:preliminaries} establishes notation and preliminary results. Section \ref{sec:observations} characterises irreducible endomorphisms and when an endomorphism is a factor of another. Section \ref{sec:uniqueness} explores the (non-)uniqueness of factorisations of endomorphisms, and Section \ref{sec:Incidence matrices} uses incidence matrices to explore the (ir-)reducibility of endomorphisms further.\par
Due to space constraints, some proofs have been omitted from the main body of this paper and can be found in the full version \cite{Bel26}.
\section{Preliminaries}\label{sec:preliminaries}
By $\mathbb{N}$, we denote the set of all positive integers $\mathbb{N}=\{1, 2, \dots\}$. We define $\mathbb{N}_0=\mathbb{N}\cup \{0\}$. The symbol $\subset$ denotes the \emph{proper subset} relation and $\subseteq$ denotes the \emph{subset} relation. An \emph{alphabet} $\Sigma$ is a non-empty set of symbols. A word (over $\Sigma$) is a finite sequence of symbols from $\Sigma$. The \emph{concatenation} of two words $u$ and $v$, denoted $u\cdot v$, or $uv$ for short, is the word formed by writing the symbols of $u$ followed by the symbols of $v$. The size of a set $A$ is denoted by $|A|$ and the length of a word $w$ by $|w|$. We denote the \emph{empty word}, i.\,e., the unique word of length 0, by $\varepsilon$. The notation $\Sigma^+$ refers to the set of all non-empty words over $\Sigma$, or more precisely, the \emph{free semigroup} generated by $\Sigma$, and $\Sigma^*=\Sigma^+\cup\{\varepsilon\}$, or the \emph{free monoid} generated by $\Sigma$. The result of an $n$-fold concatenation of a word $w$ is denoted by $w^n$, and we call $w^n$ \emph{periodic}, with a period $|w|$. We call a word $v\in\Sigma^*$ a \emph{factor} of a word $w\in\Sigma^*$ if, for some $u_1, u_2 \in \Sigma^*$, we have $w=u_1vu_2$. For a word $w\in\Sigma^*$, a \emph{prefix} $u\in\Sigma^*$ is a word satisfying $w=uv$, for some $v\in\Sigma^*$. Similarly, we call $v\in\Sigma^*$ a \emph{suffix} of $w$ if there exists a $u\in\Sigma^*$ such that $w=uv$. The notations $\pref_n(w)$ and $\suff_m(w)$ refer to the prefix of length $n$ and the suffix of length $m$ of a word $w$ respectively. For any words $v, w\in\Sigma^*$, the notation $|w|_v$ stands for the number of (possibly overlapping) occurrences of $v$ in $w$. For an alphabet $\Sigma=\{a_1, a_2, \dots, a_n\}$, the \emph{Parikh vector} of a word $w\in\Sigma^*$, denoted $p(w)$, is the vector $p(w)=(|w|_{a_1}, |w|_{a_2}, \dots, |w|_{a_n})\in\mathbb{N}_0^n$. The notation $\symb(w)$ represents the set of symbols in the word $w$, with $\symb(w)\subseteq\Sigma$. For a set $W=\{w_1, w_2, \dots, w_n\}\subset\Sigma^+$, we define $\symb(W)=\bigcup_{1\leq i \leq n}\symb(w_i)$. Given a word $w\in\Sigma^*$ that satisfies $w=u_1u_2\cdots u_{n-1}u_n$ with all $u_i\in\Sigma$, the \emph{reversal} of $w$, denoted $w^r$, is defined as $w^r=u_nu_{n-1}\cdots u_2u_1$. Given a non-empty set of words $W\subseteq \Sigma^*$, we call $W$ a \emph{code} if, for any two words $x_1x_2\cdots x_n, y_1y_2\cdots y_m\in W^+$, the equality $x_1x_2\cdots x_n=y_1y_2\cdots y_n$ implies that $n=m$ and $x_i=y_i$ for all $i, 1\leq i \leq n$.\par
A \emph{(semigroup) (homo-)morphism} is a mapping compatible with concatenation, i.\,e., for alphabets $\Sigma_1, \Sigma_2$, we define $\varphi:\Sigma_1^+\rightarrow\Sigma_2^+$ to be a morphism if it satisfies $\varphi(u \cdot v)=\varphi(u)\cdot\varphi(v)$ for all words $u,v\in\Sigma_1^+$---hence a morphism is fully defined as soon as it is declared for all symbols in $\Sigma_1$. A morphism $\varphi:\Sigma_1^+\rightarrow\Sigma_2^+$ is an \emph{endomorphism} if $\Sigma_1=\Sigma_2$, and we write $\varphi:\Sigma^+\rightarrow\Sigma^+$.
The \emph{identity} morphism $\text{id}:\Sigma^+\rightarrow\Sigma^+$ satisfies $\text{id}(a_i)=a_i$ for all $a_i\in\Sigma$.  A morphism $\varphi:\Sigma_1^+\rightarrow\Sigma_2^+$ is \emph{injective} if, for all $u, v\in\Sigma_1^+$, the equality $\varphi(u)=\varphi(v)$ implies $u=v$; equivalently, the set $W=\{\varphi(a_i)\mid 1\leq i \leq n\}$ is a code. A morphism $\varphi:\Sigma_1^+\rightarrow\Sigma_2^+$ is a \emph{renaming} if $\varphi$ is injective, and $|\varphi(a_i)|=1$ for every $a_i\in\Sigma_1$. An endomorphism $\varphi:\Sigma^+\rightarrow\Sigma^+$ is an \emph{automorphism} if it is a renaming. For the \emph{composition} of two endomorphisms $\psi_1, \psi_2:\Sigma^+\rightarrow\Sigma^+$, we write $\psi_2\circ \psi_1$, i.\,e., for every $w\in\Sigma^+$, we have $(\psi_2\circ \psi_1)(w)=\psi_2(\psi_1(w))$. If $\varphi=\psi_2\circ \psi_1$, for endomorphisms $\varphi, \psi_1, \psi_2:\Sigma^+\rightarrow \Sigma^+$, we call $\psi_2\circ \psi_1$ a \emph{factorisation} or \emph{decomposition} of $\varphi$. For endomorphisms $\varphi, \mu:\Sigma^+\rightarrow\Sigma^+$, we say that $\mu$ is a \emph{factor} of $\varphi$ if there exist endomorphisms $\psi_1, \psi_2:\Sigma^+\rightarrow\Sigma^+$ satisfying $\varphi=\psi_2 \circ \mu \circ \psi_1$. We say that $\mu$ is a \emph{left-} or \emph{right-factor} of $\varphi$ if there exist endomorphisms $\psi_1, \psi_2:\Sigma^+\rightarrow\Sigma^+$ satisfying $\varphi=\mu\circ \psi_1$ or $\varphi=\psi_2 \circ \mu$ respectively.  For an endomorphism $\varphi:\Sigma^+\rightarrow\Sigma^+$, we define the \emph{reversal} of $\varphi$, denoted $\varphi^r$, as $\varphi^r(a_i)=(\varphi(a_i))^r$ for all $a_i\in\Sigma$.
\section{Observations and Fundamental Characterisations}\label{sec:observations}
In this section, we aim to define formally and characterise reducibility for a rank-preserving endomorphism.
Unlike the factorisation of integers into prime factors, the decomposition of a rank-preserving endomorphism into irreducible factors is not unique. We establish this and other fundamental properties, and provide illustrative examples. \par
It is immediate that all morphisms $\varphi$ can be expressed as  $\varphi=\psi_2\circ \psi_1$ if either $\psi_1$ or $\psi_2$ are automorphisms, and we call such decompositions \emph{trivial}. We begin by showing that morphisms that are not endomorphisms yield no interesting decomposition phenomena.
\begin{proposition}\label{prop: alphabet increase reducible}
    Let $\varphi:\Sigma_1^+\rightarrow\Sigma_2^+$ be a morphism for alphabets $\Sigma_1, \Sigma_2$ satisfying $\Sigma_1\subset\Sigma_2$, and $\Sigma_2=\symb(\{\varphi(a_i) \mid a_i\in\Sigma_1\})$. If $\varphi$ does not satisfy $|\varphi(a_j)|=2$ for exactly one $a_j\in\Sigma_1$ and $|\varphi(a_i)|=1$ for all $a_i\in\Sigma_1\setminus\{a_j\}$, then, there exist an alphabet $\Sigma_3$ satisfying $\Sigma_1\subset\Sigma_3\subseteq\Sigma_2$, a morphism $\psi_1:\Sigma_1^+\rightarrow\Sigma_3^+$ and a non-renaming $\psi_2:\Sigma_3^+\rightarrow\Sigma_2^+$ such that $\varphi=\psi_2\circ\psi_1.$
\end{proposition}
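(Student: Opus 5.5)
The plan is to build $\Sigma_3$ by adding exactly one new symbol to $\Sigma_1$, so that $|\Sigma_3|=|\Sigma_1|+1$ and $\Sigma_1\subset\Sigma_3\subseteq\Sigma_2$. Then $\psi_2$ will be forced to be a non-renaming by a counting argument. Write $\Sigma_1=\{a_1,\dots,a_n\}$. Since $\Sigma_1\subset\Sigma_2$, we may fix a symbol $b\in\Sigma_2\setminus\Sigma_1$ and set $\Sigma_3=\Sigma_1\cup\{b\}$.

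The idea is to let $\psi_1$ do very little and push almost all the work into $\psi_2$. First consider the case where some image $\varphi(a_j)$ has length at least $2$. Write $\varphi(a_j)=xy$ with $x,y$ non-empty. Define $\psi_1(a_j)=a_jb$ and $\psi_1(a_i)=a_i$ for $i\neq j$. Define $\psi_2(a_j)=x$, $\psi_2(b)=y$, and $\psi_2(a_i)=\varphi(a_i)$ for $i\ne j$. Then $\psi_2\circ\psi_1=\varphi$ holds immediately on generators. It remains to ensure that $\psi_2$ is not a renaming. This fails only if every value of $\psi_2$ has length $1$ and the values are pairwise distinct. That would force $|\varphi(a_j)|=2$ and $|\varphi(a_i)|=1$ for all $i\neq j$, which is exactly the excluded configuration.

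The excluded configuration can only arise if the lengths are precisely of that form, so it remains to handle the alternatives. If some $\varphi(a_j)$ has length at least $3$, the chosen split already makes $\psi_2$ a non-renaming, because $x$ or $y$ has length at least $2$. If two images have length at least $2$, then $\psi_2(a_k)=\varphi(a_k)$ has length at least $2$ for the other index $k$. So the first case is fully settled.

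The remaining case is $|\varphi(a_i)|=1$ for all $i$. Then $\symb(\{\varphi(a_i)\})$ has at most $n$ symbols, whereas $\Sigma_2\supsetneq\Sigma_1$ has at least $n+1$. This contradicts $\Sigma_2=\symb(\{\varphi(a_i)\mid a_i\in\Sigma_1\})$, so the case cannot occur. This surjectivity hypothesis is the main point to keep track of: it is what rules out the all-length-one case. The same counting also shows that under the excluded configuration a decomposition genuinely fails, which explains the hypothesis.

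I expect the only delicate step to be the observation that a renaming must be injective with length-$1$ images. This requires checking that, in the exceptional length pattern, $\psi_2$ could accidentally be a renaming. That is exactly why that pattern is excluded, and in every other case a length-$\ge 2$ image exists.
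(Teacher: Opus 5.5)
Your argument is correct. The paper defers its own proof of this proposition to the full version, so there is no in-text route to compare against. Your construction works as follows. With $\Sigma_3=\Sigma_1\cup\{b\}$, $\psi_1(a_j)=a_jb$, $\psi_2(a_j)=x$, $\psi_2(b)=y$ and $\psi_2(a_i)=\varphi(a_i)$, the composition agrees with $\varphi$ on generators. The map $\psi_2$ can be a renaming only in the excluded length pattern. Your counting argument, which uses finiteness of $\Sigma_1$ (the paper's standing setting), correctly rules out the case where every image has length $1$.

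Your construction proves more than the statement requires, and one of your side claims is wrong.

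\textbf{The literal statement is easier.} As written, only $\psi_2$ must be a non-renaming, and there is a one-line proof that never uses the excluded-configuration hypothesis. Take $\Sigma_3=\Sigma_2$ and let $\psi_1$ be the inclusion $a_i\mapsto a_i$. Let $\psi_2$ agree with $\varphi$ on $\Sigma_1$ and fix every letter of $\Sigma_2\setminus\Sigma_1$. Your counting argument gives some $|\varphi(a_i)|\geq 2$, so $\psi_2$ is not a renaming.

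\textbf{Your closing remark is false as stated.} You say that under the excluded configuration a decomposition genuinely fails. For the literal statement this is wrong: with $\Sigma_1=\{a\}$, $\Sigma_2=\{a,c\}$ and $\varphi(a)=ac$, the maps $\psi_1(a)=a$, $\psi_2(a)=ac$, $\psi_2(c)=c$ give a valid decomposition.

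\textbf{What your construction actually proves.} Your $\psi_1$ is also a non-renaming, since $|\psi_1(a_j)|=2$. So you have proved the stronger version in which neither factor is a renaming, and for that version the hypothesis is sharp. Suppose we are in the excluded configuration. The images then have total length $n+1$ and $\Sigma_2\supsetneq\Sigma_1$, so $\Sigma_2$ consists of exactly $n+1$ letters, each occurring once among the images. This forces $\Sigma_3=\Sigma_2$. Since $|\psi_2(w)|\geq|w|$, we get $|\psi_1(a_i)|=1$ for $i\neq j$. A non-renaming $\psi_1$ cannot merely be non-injective, because the images of $\varphi$ are pairwise distinct, so it must have $|\psi_1(a_j)|=2$. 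The $n+1$ letter occurrences in the images of $\psi_1$ must then be $n+1$ distinct letters, that is, all of $\Sigma_3$, each sent by $\psi_2$ to a distinct single letter. Hence $\psi_2$ is a renaming. If you keep your final remark, state it for this stronger version and include this argument.
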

Therefore, we consider only endomorphisms from this point onwards. Our next restriction is to consider \emph{rank-preserving} endomorphisms. A rank-preserving endomorphism cannot be factored through a smaller alphabet than the one it is defined over. 
\begin{definition}\label{def:rank-preserving}
    Let $\varphi:\Sigma^+\rightarrow\Sigma^+$ be an endomorphism for an alphabet $\Sigma$. We call $\varphi$ \emph{rank-preserving} if there is no set $V\subset \Sigma^+$ with $|V|<|\Sigma|$ such that $\varphi(w)\in V^+$ for all $w\in \Sigma^+$. Otherwise, we call $\varphi$ \emph{rank-reducing}.
\end{definition}
By the Defect Theorem \cite{Har04}, non-injective endomorphisms are rank-reducing, since if the set of morphic images of all $a_i\in \Sigma$ is not a code, they can be generated by a set of fewer than $|\Sigma|$ words. The following examples show that injective endomorphisms may still be rank-reducing.
\begin{example}
    Let $\Sigma=\{a, b, c\}$ be an alphabet and $\varphi:\Sigma^+\rightarrow\Sigma^+$ be the endomorphism satisfying $\varphi(a)=aabb, \varphi(b)=baab, \varphi(c)=a$. Note that no word $u\in\{aabb, baab, a\}$ is a suffix of another, so the set $\{aabb, baab, a\}$ is a (suffix-free) code, and $\varphi$ is injective. For all $w\in \Sigma^+$, the set $V=\{a,b\}$ satisfies $\varphi(w)\in V^+$, and so $\varphi$ is rank-reducing. \hfill $\Diamond$
\end{example}
\begin{example}
    Let $\Sigma=\{a, b, c\}$ be an alphabet and $\varphi:\Sigma^+\rightarrow\Sigma^+$ be the endomorphism satisfying $\varphi(a)=ababc, \varphi(b)=abcab, \varphi(c)=ccc$. We may see that $\{ababc, abcab, ccc\}$ is a prefix-free code, therefore $\varphi$ is injective. For all $w\in\Sigma^+$, the set $V=\{ab, c\}$ satisfies $\varphi(w)\in V^+$, and so $\varphi$ is rank-reducing. \hfill $\Diamond$
\end{example}
Many of the examples given in this paper are endomorphisms over a binary alphabet. To verify that such an endomorphism $\varphi:\{a,b\}^+\rightarrow\{a,b\}^+$ is rank-preserving, it is sufficient to  confirm that $\varphi(a)^n\neq \varphi(b)^m$ for all $n,m\in\mathbb{N}$.\par
We can see that rank-reducing endomorphisms are exactly the \emph{simplifiable} morphisms given in \cite{Ehr78}. From this, we can conclude that rank-reducing endomorphisms have a nontrivial factorisation. The notion of simplifiability is shown to be NP-complete in \cite{Ner90}, and therefore the notion of elementariness is shown to be co-NP-complete. Our theory differs from that of simplifiable and elementary morphisms, as we consider the decomposition of \emph{elementary} endomorphisms. We have defined ``rank-preserving'' as an equivalent notion to ``elementary'' to avoid confusion. We now define the notion of reducibility of rank-preserving endomorphisms.
\begin{definition}
    Let $\varphi:\Sigma^+\rightarrow\Sigma^+$ be a rank-preserving endomorphism for an alphabet $\Sigma$. Then $\varphi$ is \emph{reducible} if there exist rank-preserving endomorphisms $\psi_1, \psi_2:\Sigma^+\rightarrow\Sigma^+$ that are not automorphisms such that $\varphi=\psi_2\circ \psi_1$. A rank-preserving endomorphism $\varphi$ is \emph{irreducible} if it is not reducible.
\end{definition}
The following example shows instances of reducible and irreducible endomorphisms.
\begin{example}\label{ex:first irreducible}
    Let $\varphi_1, \varphi_2:\Sigma^+\rightarrow\Sigma^+$ be rank-preserving endomorphisms for $\Sigma=\{a,b\}$, where $\varphi_1(a)=a, \varphi_1(b)=aabb$ and $\varphi_2(a)=a, \varphi_2(b)=babb$. Then, for endomorphisms $\psi_1,\psi_2:\Sigma^+\rightarrow\Sigma^+$ where $\psi_1(a)=a, \psi_1(b)=ab, \psi_2(a)=a$, and $ \psi_2(b)=abb$, it follows that $\varphi_1=\psi_2\circ \psi_1$. It can be easily verified that $\psi_1$ and $\psi_2$ are both rank-preserving and not automorphisms, so $\varphi_1$ is reducible. Using an exhaustive search seen in \cite{Bel26}, we can establish that $\varphi_2$ is an irreducible endomorphism. \hfill $\Diamond$
\end{example}
Clearly, verifying irreducibility by exhaustive search is inefficient. We therefore seek a more direct characterisation. Our approach is motivated by the observation that the  structure of an endomorphism $\varphi:\Sigma^+\rightarrow\Sigma^+$ is reflected in $\varphi(w)$, for any word $w\in\Sigma^+$. As an endomorphism is defined by a set of words, each corresponding to the morphic image of an $a_i\in\Sigma$, it is apparent that, for endomorphisms $\psi_1, \psi_2:\Sigma^+\rightarrow\Sigma^+$, satisfying $\varphi=\psi_2\circ \psi_1$, each of the images $\varphi(a_i)$ must reflect the structure of both $\psi_1$ and $\psi_2$. To do this, we introduce the notion of a \emph{factor basis}.
\begin{definition}\label{def:factor basis}
    Let $\Sigma=\{a_1, a_2, \dots, a_n\}$ be an alphabet and let $W\subset \Sigma^+$. Let the set $V\subset\Sigma^+$ satisfy  $|V|\leq |\Sigma|$ and $W\subseteq V^+$. We call such a $V$ a \emph{factor basis of $W$}.\par
    Similarly, given an endomorphism $\varphi:\Sigma^+\rightarrow\Sigma^+$ with $W=\{\varphi(a_i) \mid 1 \leq i \leq n \}$, we call any factor basis $V$ of $W$ a \emph{factor basis of $\varphi$}. We call a factor basis $V$ of $W$ trivial if $V=W$ or $V=\Sigma$. Otherwise, a factor basis is nontrivial.
\end{definition}
It is immediate that every endomorphism $\varphi:\Sigma^+\rightarrow\Sigma^+$ admits the factor bases $V=W$ and $V=\Sigma$, though these may coincide. The factor bases are exactly the sets $V$ referred to in Definition \ref{def:rank-preserving}, and so we may restate this definition in terms of factor bases.
\addtocounter{definition}{-3}
\begin{definition}{(Restated)}
    Let $\varphi:\Sigma^+\rightarrow\Sigma^+$ be an endomorphism for an alphabet $\Sigma$. We call $\varphi$ \emph{rank-preserving} if and only if every factor basis $V$ of $\varphi$ satisfies $|V|=|\Sigma|$.
\end{definition}
\addtocounter{definition}{2}
We can also see that rank-preserving endomorphisms only admit factor bases that are codes, otherwise, by the Defect Theorem \cite{Har04}, there exists a factor basis $V'$ with $|V'|<|V|=|\Sigma|$.\par
As a corollary of the considerations made in \cite{Ner90}, we may restate the decision problem of rank-preserving and rank-reducing in terms of factor bases.
\begin{corollary}{\cite{Ner90}}
    Let $\varphi:\Sigma^+\rightarrow\Sigma^+$ be an endomorphism over an alphabet $\Sigma$.
    \begin{itemize}
        \item Determining if there exists a factor basis of $\varphi$ with cardinality no more than $|\Sigma|-1$ is NP-complete.
        \item Determining whether every factor basis of $\varphi$ has cardinality $|\Sigma|$ is co-NP-complete.
    \end{itemize}\par
\end{corollary}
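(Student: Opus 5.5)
The corollary is a translation of the complexity result of \cite{Ner90} into the language of factor bases, so the plan is a reduction. By Definition~\ref{def:factor basis}, a factor basis of $\varphi$ of cardinality at most $|\Sigma|-1$ is exactly a set $V\subset\Sigma^+$ with $|V|<|\Sigma|$ and $\{\varphi(a_i)\mid a_i\in\Sigma\}\subseteq V^+$. Since $V^+$ is closed under concatenation, this inclusion holds if and only if $\varphi(w)\in V^+$ for all $w\in\Sigma^+$. Hence the first problem asks precisely whether $\varphi$ is rank-reducing in the sense of Definition~\ref{def:rank-preserving}. By the identification above, it also asks whether $\varphi$ is simplifiable in the sense of \cite{Ehr78}.

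To make the last step explicit, I would show: if $V=\{v_1,\dots,v_k\}$ with $k<|\Sigma|$, take $\Sigma_2=\{b_1,\dots,b_k\}$ and $\psi_2(b_j)=v_j$. Each $\varphi(a_i)\in V^+$ can be written as a product $v_{j_1}\cdots v_{j_m}$, so set $\psi_1(a_i)=b_{j_1}\cdots b_{j_m}$. Then $\varphi=\psi_2\circ\psi_1$ factors through a smaller alphabet. Conversely, any such factorisation yields $V=\psi_2(\Sigma_2)$ with $|V|\le|\Sigma_2|<|\Sigma|$, and $V$ is a factor basis.

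\textbf{First bullet.} With this equivalence, NP-completeness transfers directly from the NP-completeness of simplifiability for endomorphisms shown in \cite{Ner90}. Membership in NP I would also note directly. A certificate is $V$ together with the decompositions of each $\varphi(a_i)$. The words in $V$ can be taken to be factors of the images, so the certificate has polynomial size and is checkable in polynomial time.

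\textbf{Second bullet.} By the restated Definition~\ref{def:rank-preserving}, every factor basis has cardinality $|\Sigma|$ exactly when none has cardinality at most $|\Sigma|-1$. Note that $\Sigma$ itself is always a factor basis, and the definition forces $|V|\le|\Sigma|$. The second problem is therefore the complement of the first, and it is co-NP-complete.

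\textbf{Main obstacle.} The only real care needed is to confirm that the hardness in \cite{Ner90} holds in the endomorphism setting, where domain and codomain alphabets coincide, rather than only for general morphisms $\Sigma_1^+\to\Sigma_2^+$. If their reduction produces morphisms into a different alphabet, I would pad. Enlarge both alphabets to their union, and map each extra domain letter to a fresh codomain letter used nowhere else, so that such letters must each stay in any factor basis. This preserves simplifiability and the size gap.
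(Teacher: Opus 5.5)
Your main line is exactly how the paper justifies the corollary. A factor basis of size at most $|\Sigma|-1$ witnesses that $\varphi$ is rank-reducing, i.e.\ simplifiable in the sense of \cite{Ehr78}. Hardness is then taken from \cite{Ner90}, NP-membership comes from guessing a small basis of factors of the images, and the second bullet is the complement. The paper offers nothing beyond the remark preceding the corollary. The step where you go beyond the paper, the padding that is meant to move the hardness of \cite{Ner90} into the endomorphism setting, does not work as described.

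\cite{Ner90} concerns finite sets $X\subset A^+$, i.e.\ morphisms $h:\Sigma_1^+\to\Sigma_2^+$ with $|\Sigma_1|=|X|$. If $|\symb(X)|<|X|$, the instance is trivially simplifiable, since the letters of $X$ form a smaller basis. If $|\symb(X)|=|X|$, a renaming already gives an endomorphism. So padding is needed exactly when $|\symb(X)|>|X|$, and that is the case your scheme cannot handle.
\begin{itemize}
\item In $\Sigma=\Sigma_1\cup\Sigma_2$ there are $|\Sigma|-|X|$ extra domain letters but only $|\Sigma|-|\symb(X)|$ letters unused by the images, so there are not enough fresh letters. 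Introducing genuinely new letters adds them to the domain as well, so the deficit $|\symb(X)|-|X|$ never closes.
\item Worse, suppose the padding images avoid $\symb(X)$ and use a letter set $B$. Then $X\cup B$ is a factor basis of size $|X|+|B|<|\Sigma|$. So \emph{every} padded endomorphism is rank-reducing, whether or not $X$ was elementary, and the reduction sends all instances to yes-instances.
\item The padding images therefore have to share letters with $X$, and then rank is no longer additive. For example, $X=\{ab,ba,cdefg\}$ is elementary (rank $3$). Padding it with $\#a,\#b,\#c,\#d,\#e$ gives eight words over eight letters, yet these admit the factor basis $\{\#,a,b,c,d,e,fg\}$ of size $7$.
\end{itemize}
To close the gap you must do one of two things. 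Either check that the hard instances produced in \cite{Ner90} already satisfy $|\symb(X)|=|X|$. Or give a padding together with a proof that it preserves (non-)elementariness. Without that, the hardness half of both bullets is not established for endomorphisms by your argument. The paper likewise simply takes this transfer for granted.
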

We now state the first main result of this section, which characterises the reducibility of rank-preserving endomorphisms.
\begin{theorem}\label{thm:d-p reducible}
    Let $\varphi:\Sigma^+\rightarrow\Sigma^+$ be a rank-preserving endomorphism where $\Sigma$ is an alphabet with $|\Sigma|>1$. Then $\varphi$ is reducible if and only if there exists a nontrivial factor basis $V$ of $\varphi$.
\end{theorem}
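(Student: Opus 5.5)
The proof splits into the two directions of the equivalence. Both rest on the restated Definition of rank-preserving and on the observation that every factor basis of a rank-preserving $\varphi$ is a code. Write $\Sigma=\{a_1,\dots,a_n\}$ and $W=\{\varphi(a_i)\mid 1\le i\le n\}$. Since $\varphi$ is rank-preserving it is injective, so $|W|=n$.

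For ($\Rightarrow$), suppose $\varphi=\psi_2\circ\psi_1$ with $\psi_1,\psi_2$ rank-preserving non-automorphisms, and put $V=\{\psi_2(a_i)\mid 1\le i\le n\}$. We have $|V|\le n$, and $\varphi(a_i)=\psi_2(\psi_1(a_i))\in V^+$, so $V$ is a factor basis of $\varphi$. It remains to show that $V$ is nontrivial.

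First, suppose $V=\Sigma$. Then $|V|=n$ forces each $\psi_2(a_i)$ to be a single letter, and these letters are pairwise distinct. Hence $\psi_2$ is a renaming, i.e.\ an automorphism, which is a contradiction.

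Second, suppose $V=W$. Then $\psi_2$ maps $\Sigma$ bijectively onto $W$, so $\psi_2=\varphi\circ\pi$ for some permutation-renaming $\pi$. This gives $\varphi\circ\pi\circ\psi_1=\varphi$. Since $\varphi$ is injective, $\pi\circ\psi_1=\mathrm{id}$. Lengths then force $|\psi_1(a_i)|=1$ for every $i$, so $\psi_1$ is a renaming, again a contradiction.

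For ($\Leftarrow$), let $V$ be a nontrivial factor basis of $\varphi$. By the restated Definition, $|V|=n$, and $V$ is a code. Write $V=\{v_1,\dots,v_n\}$ and define $\psi_2(a_i)=v_i$. Because $V$ is a code, each $\varphi(a_j)\in V^+$ has a unique factorisation $v_{i_1}\cdots v_{i_k}$. Define $\psi_1(a_j)=a_{i_1}\cdots a_{i_k}$. Then $\psi_2\circ\psi_1=\varphi$ on letters, hence everywhere.

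Both factors are rank-preserving. If $\psi_1$ had a factor basis $U$ with $|U|<n$, then $\psi_2(U)$ would be a set of at most $|U|<n$ words generating $W$, contradicting rank preservation of $\varphi$. If $\psi_2$ had such a basis $U$, then $W\subseteq V^+\subseteq U^+$, the same contradiction.

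Neither factor is an automorphism. If $\psi_2$ were a renaming, then $V$ would consist of $n$ distinct letters, so $V=\Sigma$, contradicting nontriviality. If $\psi_1$ were a renaming, then each $\varphi(a_j)$ would be a single $v_i$, and injectivity would give $W=V$, again contradicting nontriviality.

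I expect the main care point to be the case $V=W$ in ($\Rightarrow$). There one must argue through injectivity of $\varphi$ rather than simply counting. A related subtlety is ensuring that the letter-level equalities $\psi_2(a_i)=v_i$ match the sets exactly, which uses $|V|=n$ and the injectivity of $\psi_2$ on letters. The hypothesis $|\Sigma|>1$ plays only a minor role: it guarantees that nontrivial configurations can exist at all. If needed, the case $n=1$, where $W=\{a^k\}$, can be handled by hand.
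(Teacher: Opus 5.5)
Your proof is correct and follows essentially the same route as the paper's. It uses the same construction ($\psi_2(a_i)=v_i$, with $\psi_1$ obtained by unique decoding over the code $V$) and the same case split $V=\Sigma$ versus $V=W$ for the converse; your step $\psi_2=\varphi\circ\pi$ followed by injectivity of $\varphi$ is just a rephrasing of the paper's unique-decoding argument. Your proof that $\psi_1$ is rank-preserving is tighter than the paper's: you push a small factor basis $U$ of $\psi_1$ forward to the factor basis $\psi_2(U)$ of $\varphi$, whereas the paper only infers rank preservation from every letter occurring in the images of $\psi_1$.
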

\begin{proof}
    Note that any factor basis satisfies $|V|=n$, as $\varphi$ is rank-preserving, and let $\Sigma=\{a_1, a_2, \dots, a_n\}$.\par
    We prove the ``if'' direction first. Assume that such a nontrivial factor basis $V=\{v_1, v_2, \dots, v_n\}$ exists. For each $a_i\in\Sigma$, we have $\varphi(a_i)=v_{i_1}v_{i_2}\cdots v_{i_{j_i}}$ for all $i_{\ell}, 1\leq i_{\ell}\leq n$ and all $j_k\geq 1$. As $V$ is a code, there exists one such factorisation. Then, define $\psi_1(a_i)=a_{i_1}a_{i_2}\cdots a_{i_{j_i}}$, and $\psi_2(a_i)=v_i$ for $1\leq i \leq n$. Thus, $(\psi_2\circ \psi_1)(a_i)=\psi_2(a_{i_1}a_{i_2}\cdots a_{i_{j_i}})=v_{i_1}v_{i_2}\cdots v_{i_{j_i}}=\varphi(a_i)$ as required, both $\psi_1, \psi_2:\Sigma^+\rightarrow\Sigma^+$ are rank-preserving, and neither $\psi_1$ nor $\psi_2$ is an automorphism. To see this, note that if $\psi_1$ is an automorphism, then $\psi_2(a_i)\in W$ for all $i, 1\leq i \leq n$, and thus $V=W$. Similarly, if $\psi_2$ is an automorphism, then $V=\Sigma$. As $\varphi$ is rank-preserving, there does not exist a factor basis $V'$ of $\varphi$ or $V$ such that $|V'|<|V|$, and so $\psi_2(a_i)=v_i$ is rank-preserving. Similarly, as $\psi_2$ is rank-preserving, there does not exist a factor basis $V''$ of $\psi_2$ such that $|V''|< |\Sigma|$, therefore $\bigcup_{1\leq i \leq n}\symb(\psi_1(a_i))= \Sigma$, and so $\psi_1$ is rank-preserving. Therefore, $\varphi$ is reducible.\par
    We now prove the ``only if'' part of the theorem. Assume that $\varphi$ is reducible, and let $\psi_1, \psi_2:\Sigma^+\rightarrow\Sigma^+$ be rank-preserving endomorphisms satisfying $\varphi=\psi_2\circ \psi_1$, with neither $\psi_1$ nor $\psi_2$ an automorphism. Let $v_i=\psi_2(a_i)$, with $V=\{v_i \mid 1 \leq i \leq n\}$, which by Definition \ref{def:factor basis} is a factor basis. As $\psi_2$ is rank-preserving, we have that $|V|=|\Sigma|$ and $V$ is a code. If $V=\Sigma,$ then $\psi_2$ is an automorphism, contradicting our assumption. If $V=W$, then since $V$ is a code, then we can uniquely decode each $\varphi(a_i)$ with respect to $V$, which must correspond to $\psi_1(a_i)$, since $\varphi=\psi_2\circ \psi_1$. Therefore, and as $\psi_1$ is rank-preserving, each $\psi_1(a_i)$ is a distinct word of length 1 over $\Sigma$, and so $\psi_1$ is an automorphism, contradicting our assumption. Hence, as $V\neq \Sigma$ and $V\neq W$, the factor basis $V$ is nontrivial.
\end{proof}
We now apply Theorem \ref{thm:d-p reducible} to an example morphism.
\begin{example}\label{ex:non unique sec 3}
    Let $\varphi:\{a,b\}^+\rightarrow\{a,b\}^+$ be the rank-preserving endomorphism satisfying $\varphi(a)=ab^3ab^2a$ and $ \varphi(b)=b^2a^2b$. There exists a nontrivial factor basis $V=\{ab, bba\}$ that satisfies the conditions of Theorem \ref{thm:d-p reducible}, and thus $\varphi$ is reducible. We obtain that $\varphi=\psi_2\circ \psi_1$, where $\psi_1(a)=abb, \psi_1(b)=ba, \psi_2(a)=ab$ and $\psi_2(b)=bba$. \hfill $\Diamond$
\end{example}
The following example also illustrates the decision procedure implied by the proof of Theorem \ref{thm:d-p reducible}, and highlights that there may exist multiple nontrivial factor bases for an endomorphism. We can also see that the factor bases given correspond to irreducible rank-preserving endomorphisms.
\begin{example}\label{ex: thm 1 and non-uniquness}
    Let $\varphi:\{a,b\}^+\rightarrow\{a,b\}^+$ be the rank-preserving endomorphism satisfying $\varphi(a)=aba^3ba$ and $ \varphi(b)=a^3(ba)^2$. There exist nontrivial factor bases $V_1=\{a, ab\}$ and $V_2=\{a, ba\}$ that satisfy the conditions of Theorem \ref{thm:d-p reducible}, and so $\varphi$ is reducible. We can also verify that the rank-preserving endomorphisms $\psi_1, \psi_2, \psi_3, \psi_4:\{a,b\}^+\rightarrow\{a,b\}^+$ satisfying $\psi_1(a)=ba^2ba, \psi_1(b)=a^2b^2a, \psi_2(a)=a, \psi_2(b)=ab, \psi_3(a)=aba^2b, \psi_3(b)=a^3b^2, \psi_4(a)=a$ and $\psi_4(b)=ba$, satisfy the equalities $\varphi=\psi_2\circ \psi_1=\psi_4\circ \psi_3$. Note that $\psi_1, \psi_2, \psi_3, \psi_4$ are irreducible, as can be seen using Theorem \ref{thm:d-p reducible}. \hfill $\Diamond$
\end{example}
The next example shows an application of Theorem \ref{thm:d-p reducible} in the case of an irreducible endomorphism.
\begin{example}
    Let $\varphi:\Sigma^+\rightarrow\Sigma^+$ be the rank-preserving endomorphism that, for $\Sigma=\{a,b,c\}$, satisfies $\varphi(a)=aaabcbc, \varphi(b)=acbcc, \varphi(c)=bbabb$. To verify that $\varphi$ is rank preserving, a brute-force search is tedious, so we use a more efficient technique given in Section \ref{sec:Incidence matrices}, and verify that $|P(\varphi)|\neq 0$. We show that $\varphi$ is irreducible by illustrating that there does not exist a nontrivial factor basis $V$ of $\varphi$. Noting that $|\symb(\varphi(c))|=\min\{|\symb(\varphi(a))|, |\symb(\varphi(b))|, |\symb(\varphi(c))|\}$, we choose to begin our considerations with $\varphi(c)$. Informally, this means we consider the image with the `least structure'. We now see that an element $v_1\in V$ must be a non-empty prefix of $\varphi(c)$, so we consider all non-empty prefixes of $\varphi(c)$. If $v_1\in\{bb, bba, bbab, bbabb\}$, then $v_1$ is not a factor of $\varphi(a)$ or $\varphi(b)$, and so it must follow that $\varphi(a), \varphi(b)\in\{v_2, v_3\}^+$. We then consider $v_2\in V$ as a non-empty prefix of $\varphi(b)$. If $v_2\in\{ac, acb, acbc, acbcc\}$, then $v_2$ is not a factor of $\varphi(a)$ or any suffix of $\varphi(c)$.  It then must follow that $\varphi$ admits the trivial factor basis $V_1=\{aaabcbc, acbcc, bbabb\}$. We now consider when $v_1=b$ and $v_2=a$, which means that $v_3=c$. Therefore, the endomorphism $\varphi$ also admits the trivial factor basis $V_2=\{a, b, c\}$. We have shown exhaustively that $\varphi$ only admits trivial factor bases, and by Theorem \ref{thm:d-p reducible}, is irreducible. \hfill $\Diamond$
\end{example}
These examples show us that the factor bases of rank-preserving endomorphisms correspond to a left-factor in its factorisation. In fact, this generalises to all endomorphisms. To prove this, we must first consider the `pre-image' of an endomorphism, given its factor basis. The factor bases of rank-preserving endomorphisms are codes, and thus are uniquely decodable up to a reordering of elements in $V$, so their pre-images consist of exactly the renamings of some word $s$. As the factor bases of rank-reducing endomorphisms can have a smaller cardinality than the alphabet they are defined over, there exists a set of pre-images it may decode to. This motivates the following definition.
\begin{definition}\label{def:tau2}
    Let $\Sigma=\{a_1, a_2, \dots, a_n\}$ be an alphabet, and $V=\{v_1, v_2, \dots, v_m\}\subset \Sigma^+$ be a factor basis of an endomorphism $\varphi:\Sigma^+\rightarrow\Sigma^+$, with $m\leq n$. Choose a surjection $f:\{1, 2, \dots, n\}\rightarrow\{1, 2, \dots, m\}$. For such a surjection, we call the endomorphism $\rho:\Sigma^+\rightarrow\Sigma^+$ given by $\rho(a_i)=v_{f(i)}$ the endomorphism \emph{induced by $V$ and $f$}. For any $w\in\Sigma^+$, we define $\tau_{V}^f(w)$ as the set of words $s\in\Sigma^+$ satisfying $\rho(s)=w$ for $\rho$ induced by $V$ and $f$, noting that this can be empty. We define $\tau_V=\bigcup_{f}\tau_V^f$.
\end{definition}
For a factor basis of a rank-preserving endomorphism, any $f:\{1, 2, \dots, n\}\rightarrow\{1, 2, \dots, n\}$ is a bijection, which allows us to consider a re-ordering of the elements of $V$. The following example illustrates how we can find $\tau_V^f$.
\begin{example}
   Let $\varphi:\{a,b,c\}^+\rightarrow\{a,b,c\}^+$ be the endomorphism satisfying $\varphi(a)=ab, \varphi(b)=ab, \varphi(c)=cbba$. It follows that $\varphi$ admits the factor basis $V=\{ab, cbba\}$. Let $w=abcbbacbbaab$. To consider $\tau_V^f$, we must first choose a surjection $f:\{1, 2, 3\}\rightarrow\{1, 2\}$. We consider three surjections:
    \begin{enumerate}[i)]
        \item If $f_1(1)=1, f_1(2)=1, f_1(3)=2$, then $\tau_V^{f_1}(w)=\{acca, accb, bcca, bccb\}$.
        \item If $f_2(1)=1, f_2(2)=2, f_2(3)=1$, then $\tau_V^{f_2}(w)=\{abba, abbc, cbba, cbbc\}$.
        \item If $f_3(1)=1, f_3(2)=2, f_3(3)=2$, then $\tau_V^{f_3}(w)=\{abba, abca, acba, acca\}$.
    \end{enumerate}
    Note that these surjections are not exhaustive, and the endomorphism $\rho_1:\{a,b,c\}^+\rightarrow\{a,b,c\}^+$ induced by $f_1$ and $V$ satisfies $\rho_1(a)=ab, \rho_1(b)=ab, \rho_1(c)=cbba$. Furthermore, we can see that $\rho_1=\varphi$.
\end{example}
Using this definition, we can show that a factor basis corresponds to a left-factor of any endomorphism. \hfill $\Diamond$
\begin{proposition}\label{prop:left-factor}
    Let $\varphi:\Sigma^+\rightarrow\Sigma^+$ be an endomorphism for an alphabet $\Sigma=\{a_1, a_2, \dots, a_n\}$, and let $V=\{v_1, v_2, \dots, v_m\}\subset \Sigma^+$ with $m\leq n$. The set $V$ is a factor basis of $\varphi$ if and only if there exist endomorphisms $\psi_1, \psi_2:\Sigma^+\rightarrow\Sigma^+$ with $\varphi=\psi_2\circ \psi_1$ and $V=\{\psi_2(a_i) \mid 1 \leq i \leq n\}$.
\end{proposition}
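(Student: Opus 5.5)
We prove both directions directly, using the induced endomorphism and the pre-image sets $\tau_V^f$ of Definition~\ref{def:tau2}; no earlier result beyond the definitions is needed.

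For the ``if'' direction, suppose $\varphi=\psi_2\circ\psi_1$ with $V=\{\psi_2(a_i)\mid 1\leq i\leq n\}$. Then for every $a_i\in\Sigma$ the word $\psi_1(a_i)=a_{i_1}\cdots a_{i_k}$ is non-empty. Hence
\[
\varphi(a_i)=\psi_2(a_{i_1})\cdots\psi_2(a_{i_k})\in V^+ .
\]
Since $V$ is by construction the image set of $n$ symbols, $|V|\leq n=|\Sigma|$, and $V\subset\Sigma^+$ because $\psi_2$ maps into $\Sigma^+$. So $V$ is a factor basis of $\varphi$ in the sense of Definition~\ref{def:factor basis}.

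For the ``only if'' direction, let $V=\{v_1,\dots,v_m\}$ be a factor basis with $m\leq n$. We may assume the $v_j$ are listed without repetition, so $m=|V|$. Choose any surjection $f:\{1,\dots,n\}\rightarrow\{1,\dots,m\}$; one exists because $1\leq m\leq n$, noting that $V$ is non-empty since $W\subseteq V^+$ and $W$ is non-empty. Let $\psi_2=\rho$ be the endomorphism induced by $V$ and $f$, i.e.\ $\psi_2(a_i)=v_{f(i)}$. Surjectivity of $f$ gives $\{\psi_2(a_i)\mid 1\leq i\leq n\}=V$ exactly.

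It remains to construct $\psi_1$. For each $i$, since $\varphi(a_i)\in V^+$, write $\varphi(a_i)=v_{j_1}v_{j_2}\cdots v_{j_k}$ with $k\geq 1$, choosing any one such decomposition; since $V$ need not be a code, it may not be unique. For each $j_\ell$ pick an index $g(j_\ell)\in f^{-1}(j_\ell)$, which is non-empty by surjectivity. Set
\[
\psi_1(a_i)=a_{g(j_1)}\cdots a_{g(j_k)},
\]
which is an element of $\tau_V^f(\varphi(a_i))$, so that set is non-empty. Then $\psi_2(\psi_1(a_i))=v_{j_1}\cdots v_{j_k}=\varphi(a_i)$ for all $i$. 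Because both sides are morphisms agreeing on the generators, $\varphi=\psi_2\circ\psi_1$.

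There is no real obstacle here. The only points needing care are that the decomposition over $V$ need not be unique, which we handle by fixing an arbitrary choice, and that $f$ must be surjective so that the image set of $\psi_2$ is all of $V$ and not a proper subset.
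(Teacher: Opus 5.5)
Your proof is correct and follows essentially the same route as the paper: $\psi_2$ is the endomorphism induced by $V$ and a surjection $f$, and $\psi_1(a_i)$ is any element of $\tau_V^f(\varphi(a_i))$. The converse is immediate because each $\psi_1(a_i)$ is non-empty, and you rightly note that the decomposition over $V$ need not be unique and that $f$ must be surjective so the image set of $\psi_2$ is all of $V$.
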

Since factorisations of an endomorphism are not necessarily unique (see Example \ref{ex: thm 1 and non-uniquness}), an endomorphism may admit multiple factor bases. To decompose a rank-preserving endomorphism into irreducible components, we must determine which factor bases give the most `refined' factorisation, as shown below.
\begin{definition}
    Given factor bases $V_1=\{v_1, v_2, \dots, v_n\}$ and $V_2=\{v'_1, v'_2, \dots, v'_n\}$, where $V_1\neq V_2$, we say that $V_1$ is \emph{more derived} than $V_2$ if $V_2\subseteq V_1^+$.
\end{definition}
We can represent the hierarchy of derivations of factor bases as a Hasse diagram, where each vertex represents a factor basis, and the directed edges represent the derivation relationship between the factor bases. The graph branches when there exists more than one derivation of a factor basis, or equivalently when the factorisation corresponding to that factor basis is not unique.
\begin{example}\label{ex:factor tree 2}
    Let $\varphi:\{a,b,c\}^+\rightarrow\{a,b,c\}+$ be the rank-preserving endomorphism satisfying $\varphi(a)=abbb, \varphi(b)=acba, \varphi(c)=bbba$. Note that $\varphi$ admits exactly the following factor bases:
    \begin{align*}
V_{1} &= \{abbb, acba, bbba\}, & V_{2} &= \{a, acba, bbb\}, & V_{3} &= \{a, acba, b\}, \\
V_{4} &= \{a, acb, bbb\}, & V_{5} &= \{a, bbb, cba\}, & V_{6} &= \{a, acb, b\}, \\
V_{7} &= \{a, b, cba\}, & V_{8} &= \{a, bbb, cb\}, & V_{9} &= \{a, ac, b\}, \\
V_{10} &= \{a, b, cb\}, & V_{11} &= \{a, b, c\}. \\
\end{align*}
Since $|V_i|=|\{a,b,c\}|$ for all $i, 1\leq i \leq 11$, we have shown that $\varphi$ is rank-preserving.
The derivation relationship between these factor bases is shown in Figure \ref{fig:factor tree 2}. \hfill $\Diamond$
\end{example}
\begin{figure}[h]
    \centering
    \begin{tikzpicture}[>={Stealth[length=1.8mm]},
                        every node/.style={inner sep=1.5pt},
                        x=1cm, y=1cm]
      \node (v1)  at (4.6,3.60) {$V_1$};
      \node (v2)  at (4.6,2.88) {$V_2$};
      \node (v3)  at (0.0,2.16) {$V_3$};
      \node (v4)  at (4.6,2.16) {$V_4$};
      \node (v5)  at (9.2,2.16) {$V_5$};
      \node (v6)  at (0.0,1.44) {$V_6$};
      \node (v7)  at (4.6,1.44) {$V_7$};
      \node (v8)  at (9.2,1.44) {$V_8$};
      \node (v9)  at (2.3,0.72) {$V_9$};
      \node (v10) at (6.9,0.72) {$V_{10}$};
      \node (v11) at (4.6,0.00) {$V_{11}$};
      \draw[->] (v1)--(v2);
      \draw[->] (v2)--(v3);  \draw[->] (v2)--(v4);  \draw[->] (v2)--(v5);
      \draw[->] (v3)--(v6);  \draw[->] (v3)--(v7);
      \draw[->] (v4)--(v6);  \draw[->] (v4)--(v8);
      \draw[->] (v5)--(v7);  \draw[->] (v5)--(v8);
      \draw[->] (v6)--(v9);  \draw[->] (v6)--(v10);
      \draw[->] (v7)--(v10);
      \draw[->] (v8)--(v10);
      \draw[->] (v9)--(v11);
      \draw[->] (v10)--(v11);
    \end{tikzpicture}
    \caption{The Hasse diagram of $\varphi$, given in Example~\ref{ex:factor tree 2}. An arrow points to a more derived factor basis.}
    \label{fig:factor tree 2}
\end{figure}
The next proposition proves the link between a nontrivial factor basis of a rank-preserving endomorphism that is as derived as possible and a factorisation of the endomorphism into irreducible factors. 
\begin{proposition}\label{prop:derivedVset}
 Let $\varphi:\Sigma^+\rightarrow\Sigma^+$ be an endomorphism for an alphabet $\Sigma=\{a_1, a_2, \dots, a_n\}$,  and let $V=\{v_1, v_2, \dots, v_m\}\subset\Sigma^+$ with $m\leq n$ be a factor basis of $\varphi$. The following statements are equivalent:
        \begin{itemize}
            \item The endomorphism $\psi:\Sigma^+\rightarrow\Sigma^+$ induced by the factor basis $V$ and a surjection $f:\{1, 2, \dots, n\}\rightarrow\{1, 2, \dots m\}$ satisfies the equality $\psi=\iota_2\circ \iota_1$ for non-automorphisms $\iota_1, \iota_2:\Sigma^+\rightarrow\Sigma^+$.
            \item There exists a nontrivial factor basis $V'\subset\Sigma^+$ that is more derived than $V$.
        \end{itemize}
    \end{proposition}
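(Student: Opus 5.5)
We prove the two implications separately, reducing each to Proposition \ref{prop:left-factor} and to the decoding argument from the proof of Theorem \ref{thm:d-p reducible}. Throughout, $\psi$ is the endomorphism induced by $V$ and $f$, so $\psi(a_i)=v_{f(i)}$ and $\{\psi(a_i) \mid 1\leq i\leq n\}=V$. The statement is only meaningful when the intermediate factors are rank-preserving (as in the definition of reducibility), so we read ``non-automorphisms'' in that setting. In particular, in the rank-preserving case $m=n$, $f$ is a bijection and all factor bases are codes of size $n$.

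For the direction from the first statement to the second, suppose $\psi=\iota_2\circ\iota_1$ with neither $\iota_1$ nor $\iota_2$ an automorphism. Set $V'=\{\iota_2(a_i)\mid 1\leq i\leq n\}$. Since $\varphi=\psi\circ\psi_1$ for the decoding morphism $\psi_1$ given by Proposition \ref{prop:left-factor}, we get $\varphi=\iota_2\circ(\iota_1\circ\psi_1)$. By Proposition \ref{prop:left-factor} again, $V'$ is a factor basis of $\varphi$. Moreover $V=\psi(\Sigma)=\iota_2(\iota_1(\Sigma))\subseteq V'^+$. The remaining checks are:
\begin{itemize}
\item $V'\neq V$: if $V'=V$, then, as $V$ is a code, decoding $\psi(a_i)=\iota_2(\iota_1(a_i))$ over $V'$ forces each $\iota_1(a_i)$ to be a single distinct letter. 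This is exactly the argument in the ``only if'' part of Theorem \ref{thm:d-p reducible}, and it would make $\iota_1$ an automorphism.
\item $V'\neq\Sigma$: otherwise $\iota_2$ would be a renaming.
\item $V'\neq W$: this follows from $V\subseteq V'^+$ and $W\subseteq V^+$ once we exclude $V'=W$. If $V'=W$, then $W\subseteq V^+\subseteq W^+$, and the code/size argument forces $V=W$ and hence collapses $V'=V$.
\end{itemize}
Thus $V'$ is a nontrivial factor basis more derived than $V$.

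For the converse, let $V'=\{v'_1,\dots,v'_n\}$ be a nontrivial factor basis with $V\subseteq V'^+$ and $V'\neq V$. Define $\iota_2(a_j)=v'_j$. Define $\iota_1(a_i)$ to be a decoding of $v_{f(i)}$ over $V'$, with $v'_j$ replaced by $a_j$; this decoding is unique when $V'$ is a code. Then $\iota_2\circ\iota_1=\psi$ holds by construction, exactly as in the ``if'' part of Theorem \ref{thm:d-p reducible}. The two non-automorphism checks are:
\begin{itemize}
\item $\iota_2$ is not an automorphism, because $V'\neq\Sigma$ by nontriviality.
\item $\iota_1$ is not an automorphism: if it were, then $V=\iota_2(\iota_1(\Sigma))=\iota_2(\Sigma)=V'$, contradicting $V'\neq V$.
\end{itemize}
Rank-preservation of $\iota_1$ and $\iota_2$, when needed, follows as in Theorem \ref{thm:d-p reducible}: any smaller factor basis of either factor would yield one for $\varphi$.

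The main obstacle is the general (possibly rank-reducing, $m<n$) case. There $V$ and $V'$ need not be codes, decodings are not unique, and ``trivial'' must be handled through the sets $\tau_{V'}$ of Definition \ref{def:tau2} rather than by a unique decoding. The first thing to try is to pick any element of $\tau_{V'}^{g}(v_{f(i)})$ for a suitable surjection $g$ to define $\iota_1$. For the exclusion of $V'\in\{V,W\}$, we would use cardinality: a non-automorphic factorisation of $\psi$ cannot have $\iota_1$ bijective on letters, and we would compare $|V'|$ with $|V|$ to rule out coincidences.
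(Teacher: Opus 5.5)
Your argument is correct when $\varphi$ is rank-preserving. It is essentially Theorem~\ref{thm:d-p reducible} applied to the induced endomorphism $\psi$: the factor bases of $\psi$ are exactly the sets $V'$ with $|V'|\le n$ and $V\subseteq V'^+$, so the second statement says that $\psi$ has a nontrivial factor basis. Two details need tightening.
\begin{itemize}
\item The detour through Proposition~\ref{prop:left-factor} is unnecessary. That proposition only gives \emph{some} left factor with image set $V$, not $\psi$ itself. The inclusion $W\subseteq V^+\subseteq V'^+$ already makes $V'$ a factor basis of $\varphi$.
\item The ``code/size argument'' for $V'\neq W$ should be written out. Since $W\subseteq V^+$ and $V\subseteq W^+$, you can write $\varphi=\psi\circ\alpha$ and $\psi=\varphi\circ\beta$. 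Then $\varphi=\varphi\circ\beta\circ\alpha$, and injectivity of $\varphi$ gives $\beta\circ\alpha=\mathrm{id}$. So $\alpha$ permutes the letters and $W=V$.
\end{itemize}

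The genuine gap is the general case. The proposition explicitly includes it ($\varphi$ arbitrary, $m\le n$, $f$ only a surjection), and you leave it open. Your plan for it also fails at the first step, because $V'=\iota_2(\Sigma)$ need not differ from $V$. Take $\Sigma=\{a,b\}$, $V=\{a\}$ and $\psi(a)=\psi(b)=a$: then $\psi=\psi\circ\psi$ with two non-automorphisms, yet $\iota_2(\Sigma)=V$.

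Your reading of the factors as rank-preserving is harmless only when $\psi$ is rank-preserving, and then it is automatic. A basis $U$ with $|U|<n$ for $\iota_2$, resp.\ $\iota_1$, would give the basis $U$, resp.\ $\iota_2(U)$, for $\psi$. In general the reading is wrong: for $m<n$ the map $\psi$ is not injective, so it is never a composition of rank-preserving endomorphisms.

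The missing idea is to split on whether $\psi$, not $\varphi$, is rank-preserving.
\begin{itemize}
\item If $\psi$ is rank-preserving, your proof goes through verbatim with $\psi$ in place of $\varphi$. For $V'\neq W$, note that $V\subseteq W^+$ forces $|W|=n$, and by the Defect Theorem $W$ is a code, i.e.\ $\varphi$ is injective. So the argument above applies.
\item If $\psi$ is rank-reducing and $m<n$, write $\psi=\psi\circ\kappa$ with $\kappa(a_i)=a_{s(f(i))}$ for a section $s$ of $f$; both factors are non-injective.
\item If $\psi$ is rank-reducing and $m=n$, factor $\psi=\iota_2\circ\iota_1$ through a basis $U$ of $V$ with $|U|<n$. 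Then $\iota_2$ is not injective on letters. Also $\iota_1$ cannot be an automorphism, since otherwise $|\psi(\Sigma)|\le|U|<n$.
\item For the second statement in the rank-reducing case, the literal definitions give a witness directly: pad such a $U$ (for $m<n$, take $U=V$) with long powers of one of its elements that avoid $V\cup W$.
\end{itemize}
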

    As we aim to find a factorisation of a reducible rank-preserving endomorphism into irreducible components, we now show that irreducibility is preserved under certain basic operations on endomorphisms.
\begin{proposition}\label{prop:reversalreducible}
    Let  $\varphi:\Sigma^+\rightarrow\Sigma^+$ be a rank-preserving endomorphism for an alphabet $\Sigma$. Then $\varphi$ is reducible if and only if $\varphi^r$ is reducible.
\end{proposition}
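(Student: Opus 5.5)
The plan is to work at the level of words rather than endomorphisms, using the fact that reversal $w\mapsto w^r$ is an anti-automorphism of $\Sigma^+$: it satisfies $(uv)^r=v^ru^r$ and $(w^r)^r=w$. For a set $V\subset\Sigma^+$ write $V^r=\{v^r\mid v\in V\}$. The first observation is that $W\subseteq V^+$ holds if and only if $W^r\subseteq (V^r)^+$. Indeed, if $w=v_{1}v_{2}\cdots v_{k}$ with $v_j\in V$, then $w^r=v_{k}^r\cdots v_{1}^r\in (V^r)^+$; the converse follows by applying this twice. Since $|V^r|=|V|$, the map $V\mapsto V^r$ is a bijection between factor bases of $W=\{\varphi(a_i)\mid a_i\in\Sigma\}$ and factor bases of $W^r=\{\varphi^r(a_i)\mid a_i\in\Sigma\}$. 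This bijection preserves cardinality, and $(V^r)^r=V$.

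First I check that $\varphi^r$ is rank-preserving whenever $\varphi$ is, so that the notion of reducibility applies to $\varphi^r$ at all. By the restated Definition \ref{def:rank-preserving}, $\varphi$ is rank-preserving if and only if every factor basis of $\varphi$ has size $|\Sigma|$. By the bijection above, this holds if and only if every factor basis of $\varphi^r$ has size $|\Sigma|$.

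Next I check that the bijection respects triviality. We have $V=W$ if and only if $V^r=W^r$. Also $V=\Sigma$ if and only if $V^r=\Sigma$, because every symbol is its own reversal. Hence $V$ is a nontrivial factor basis of $\varphi$ exactly when $V^r$ is a nontrivial factor basis of $\varphi^r$.

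Finally I invoke Theorem \ref{thm:d-p reducible}. When $|\Sigma|>1$, $\varphi$ is reducible if and only if it has a nontrivial factor basis, and by the previous step this holds if and only if $\varphi^r$ has one, that is, if and only if $\varphi^r$ is reducible. Only the forward direction needs arguing, since $(\varphi^r)^r=\varphi$ gives the converse.

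The case $|\Sigma|=1$ is excluded from Theorem \ref{thm:d-p reducible}, so it needs separate treatment, but it is trivial. Over a unary alphabet every word is a palindrome, so $\varphi^r=\varphi$ and there is nothing to prove.

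I expect no step to be a genuine obstacle. The only point needing care is confirming that $\varphi^r$ is rank-preserving before Theorem \ref{thm:d-p reducible} is applied to it, and this is handled by the cardinality-preserving bijection of factor bases. An alternative route is to use $(\psi_2\circ\psi_1)^r=\psi_2^r\circ\psi_1^r$ directly: this holds because $\psi_2^r(u^r)=(\psi_2(u))^r$, which one proves by induction on $|u|$. One then checks that reversal preserves being an automorphism and being rank-preserving. I prefer the factor-basis route, as it reuses Theorem \ref{thm:d-p reducible} verbatim.
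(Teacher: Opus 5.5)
Your proof is correct. The bijection $V\mapsto V^r$ between factor bases of $\varphi$ and of $\varphi^r$ preserves cardinality, fixes $\Sigma$ and sends $W$ to $W^r$. It therefore does two jobs at once. It shows that $\varphi^r$ is rank-preserving, so its reducibility is defined at all, and it shows that nontrivial factor bases correspond; Theorem~\ref{thm:d-p reducible} then settles the case $|\Sigma|>1$. Your separate handling of the unary case via $\varphi^r=\varphi$ is also right.

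The paper defers its proof to the full version, so the useful comparison is with the other natural route, the one you sketch at the end. That route goes from $\varphi=\psi_2\circ\psi_1$ to $\varphi^r=\psi_2^r\circ\psi_1^r$ and needs three things:
\begin{enumerate}[i)]
\item the identity $\psi_2^r(u^r)=(\psi_2(u))^r$;
\item the fact that $\psi^r$ is an automorphism if and only if $\psi$ is, since automorphisms satisfy $\psi^r=\psi$;
\item rank-preservation of $\psi_1^r$, $\psi_2^r$ and $\varphi^r$, which again comes from your factor-basis bijection.
\end{enumerate}
In exchange, it transports explicit factorisations and does not depend on Theorem~\ref{thm:d-p reducible}, so no case split on $|\Sigma|$ is needed. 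Combined with the proposition applied to each factor, it also turns a factorisation $\varphi=\psi_k\circ\cdots\circ\psi_1$ into irreducibles into the factorisation $\varphi^r=\psi_k^r\circ\cdots\circ\psi_1^r$ into irreducibles, which fits the paper's stated motivation. Your route is shorter, because the checks that the factors are rank-preserving non-automorphisms are already packaged inside Theorem~\ref{thm:d-p reducible}.
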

A further operation under which irreducibility is preserved is the equivalence of endomorphisms, which formalises the idea that two endomorphisms are the same, up to composition with an automorphism.
\begin{definition}\label{def: equivalent}
    Let $\varphi, \widetilde{\varphi}:\Sigma^+\rightarrow\Sigma^+$ be endomorphisms for an alphabet $\Sigma$. Then $\varphi$ is \emph{equivalent} to $\widetilde{\varphi}$, denoted $\varphi\cong\widetilde{\varphi}$, if there exist automorphisms $\psi_1, \psi_2:\Sigma^+\rightarrow\Sigma^+$ that satisfy $\varphi=\psi_2\circ \widetilde{\varphi}\circ \psi_1$. 
\end{definition}
\begin{proposition}\label{prop:equivalence relation}
    Equivalence of endomorphisms forms an equivalence relation.
\end{proposition}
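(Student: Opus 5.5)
We verify the three axioms of an equivalence relation directly from Definition~\ref{def: equivalent}. The only facts needed are that the automorphisms of $\Sigma^+$ (the renamings) contain $\text{id}$, are closed under composition, and each has an inverse automorphism. Once these are in hand, the proof is the usual group-action argument: $\varphi\cong\widetilde{\varphi}$ says exactly that $\varphi$ and $\widetilde{\varphi}$ lie in the same orbit of the group $\mathrm{Aut}(\Sigma^+)\times\mathrm{Aut}(\Sigma^+)$ acting by $(\psi_2,\psi_1)\cdot\mu=\psi_2\circ\mu\circ\psi_1^{-1}$.

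First we record the group facts. An automorphism $\psi$ is an injective morphism with $|\psi(a_i)|=1$ for all $a_i\in\Sigma$. Hence $\psi$ restricted to $\Sigma$ is an injective map $\Sigma\to\Sigma$, and since $\Sigma$ is finite it is a bijection, i.e.\ a permutation $\pi$ of $\Sigma$. The morphism $\psi^{-1}$ defined by $\psi^{-1}(a_i)=\pi^{-1}(a_i)$ is again a renaming. It satisfies $\psi\circ\psi^{-1}=\psi^{-1}\circ\psi=\text{id}$, since both sides agree on every symbol and morphisms are determined by their values on $\Sigma$. The composition of two renamings sends each symbol to a single symbol through a composition of bijections, so it is again a renaming. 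Finally, $\text{id}$ is clearly a renaming. This step is the only place where anything beyond bookkeeping is needed: one must observe that injectivity together with length-one images forces a permutation of the finite alphabet.

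Next we check the axioms. Reflexivity holds because $\varphi=\text{id}\circ\varphi\circ\text{id}$. For symmetry, suppose $\varphi=\psi_2\circ\widetilde{\varphi}\circ\psi_1$. Composing with $\psi_2^{-1}$ on the left and $\psi_1^{-1}$ on the right, and using associativity of composition, gives $\widetilde{\varphi}=\psi_2^{-1}\circ\varphi\circ\psi_1^{-1}$, which has the required form. For transitivity, suppose $\varphi=\psi_2\circ\widetilde{\varphi}\circ\psi_1$ and $\widetilde{\varphi}=\chi_2\circ\hat{\varphi}\circ\chi_1$. Substituting yields
\[
\varphi=(\psi_2\circ\chi_2)\circ\hat{\varphi}\circ(\chi_1\circ\psi_1),
\]
and both bracketed maps are automorphisms by closure under composition. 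Hence $\varphi\cong\hat{\varphi}$, and $\cong$ is an equivalence relation.
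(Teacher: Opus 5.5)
Your proof is correct. It is the routine check of reflexivity via $\text{id}$, symmetry via inverse renamings, and transitivity via closure of renamings under composition, which is evidently the argument the paper intends (its proof is deferred to the full version); you also correctly identify and justify the one non-bookkeeping point, namely that an injective morphism with length-one images permutes the finite alphabet and therefore has a renaming as its inverse.
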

\begin{proposition}\label{prop:equivalencereducible}
    Let $\varphi, \widetilde{\varphi}:\Sigma^+\rightarrow\Sigma^+$ be rank-preserving endomorphisms for an alphabet $\Sigma$, where $\varphi\cong \widetilde{\varphi}$. Then $\varphi$ is reducible if and only if $\widetilde{\varphi}$ is reducible.
\end{proposition}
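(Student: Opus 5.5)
By symmetry of the equivalence relation (Proposition \ref{prop:equivalence relation}), it suffices to prove one direction: if $\widetilde{\varphi}$ is reducible and $\varphi\cong\widetilde{\varphi}$, then $\varphi$ is reducible. Fix automorphisms $\psi_1,\psi_2$ with $\varphi=\psi_2\circ\widetilde{\varphi}\circ\psi_1$. Suppose $\widetilde{\varphi}$ is reducible, so $\widetilde{\varphi}=\iota_2\circ\iota_1$ for rank-preserving non-automorphisms $\iota_1,\iota_2$. Then
\[
\varphi=(\psi_2\circ\iota_2)\circ(\iota_1\circ\psi_1).
\]
It remains to show that $\psi_2\circ\iota_2$ and $\iota_1\circ\psi_1$ are both rank-preserving and neither is an automorphism.

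Composing with an automorphism preserves being a non-automorphism. If $\psi_2\circ\iota_2=\alpha$ were an automorphism, then $\iota_2=\psi_2^{-1}\circ\alpha$. Here $\psi_2^{-1}$ exists and is again a renaming, since an automorphism is a permutation of $\Sigma$. A composition of renamings is a renaming, so $\iota_2$ would be an automorphism, a contradiction. The same argument applies to $\iota_1\circ\psi_1$.

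Rank preservation under composition with an automorphism is the step needing the most care. For precomposition, $\{(\iota_1\circ\psi_1)(a_i)\}=\{\iota_1(a_i)\}$, because $\psi_1$ permutes $\Sigma$. So the image sets coincide, and the factor bases of $\iota_1\circ\psi_1$ are exactly those of $\iota_1$. For postcomposition, suppose $V$ is a factor basis of $\psi_2\circ\iota_2$ with $|V|<|\Sigma|$. Then $\psi_2^{-1}(V)$ is a factor basis of $\iota_2$ with $|\psi_2^{-1}(V)|\le|V|<|\Sigma|$. This holds because $\psi_2^{-1}$ is a morphism: applying it to $\psi_2(\iota_2(a_i))\in V^+$ gives $\iota_2(a_i)\in\psi_2^{-1}(V)^+$. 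This contradicts the restated Definition~\ref{def:rank-preserving} for $\iota_2$.

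Hence $\varphi$ is reducible. The converse follows by applying the same argument to $\widetilde{\varphi}=\psi_2^{-1}\circ\varphi\circ\psi_1^{-1}$.
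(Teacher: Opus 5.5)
Your proof is correct and complete. Regrouping $\varphi=(\psi_2\circ\iota_2)\circ(\iota_1\circ\psi_1)$ and checking that composing with a renaming on either side preserves both rank-preservation and the property of not being an automorphism is all that is needed. You also correctly identify the post-composition case as the only step that needs an argument: there you pull a small factor basis $V$ of $\psi_2\circ\iota_2$ back to the factor basis $\psi_2^{-1}(V)$ of $\iota_2$.

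The paper's own proof is not in this version (it is deferred to the full version), so I cannot confirm which route it takes. The one genuinely different route available with the paper's tools goes through Theorem~\ref{thm:d-p reducible}. A nontrivial factor basis $V$ of $\widetilde{\varphi}$ gives the factor basis $\psi_2(V)$ of $\varphi$, since $\{\varphi(a_i)\}=\psi_2(\{\widetilde{\varphi}(a_i)\})$. This basis stays nontrivial because $\psi_2$ is injective and permutes $\Sigma$.

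Your direct argument is more self-contained, since it never uses the factor-basis characterisation. The factor-basis version instead shows the invariance directly on the objects the paper uses to decide reducibility.
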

We turn to another natural question for the decomposition of an endomorphism $\varphi=\psi_2\circ \psi_1$: given endomorphisms $\varphi, \mu:\Sigma^+\rightarrow\Sigma^+$, is $\mu$ a factor of $\varphi$? In the second main result of this section, we characterise when an endomorphism is a factor of another endomorphism.
\begin{theorem}\label{thm:char_factor_new}
Let $\mu, \varphi:\Sigma^+\rightarrow\Sigma^+$ be endomorphisms for an alphabet $\Sigma$. Then $\mu$ is a factor of $\varphi$ if and only if there exists a factor basis $V = \{v_1, v_2, \ldots, v_m \} \subset \Sigma^+$ of $\varphi$, with $m\leq |\Sigma|$, such that, for some surjection $f:\{1, 2, \dots, |\Sigma|\}\rightarrow\{1, 2, \dots, m\}$ and every $i, 1 \leq i \leq |\Sigma|$, there exists a word $s_i\in\tau_V^f(\varphi(a_i))$ satisfying $s_i\in\{\mu(a_1), \mu(a_2), \dots, \mu(a_{|\Sigma|})\}^+$.
\end{theorem}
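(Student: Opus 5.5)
We prove both directions by unwinding the definitions of factor, factor basis and $\tau_V^f$, and we use Proposition \ref{prop:left-factor} to pass between factor bases and left-factors. Throughout, write $n=|\Sigma|$ and $\Sigma=\{a_1,\dots,a_n\}$.

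For the ``only if'' direction, assume $\varphi=\psi_2\circ\mu\circ\psi_1$ for endomorphisms $\psi_1,\psi_2$. Take $V=\{\psi_2(a_1),\dots,\psi_2(a_n)\}$ as a set, with distinct elements $v_1,\dots,v_m$, so $m\le n$. Define the surjection $f$ by $f(i)=j$ exactly when $\psi_2(a_i)=v_j$. Then the endomorphism $\rho$ induced by $V$ and $f$ is precisely $\psi_2$. By Proposition \ref{prop:left-factor}, applied to $\varphi=\psi_2\circ(\mu\circ\psi_1)$, the set $V$ is a factor basis of $\varphi$. (This can also be seen directly: $\varphi(a_i)=\psi_2(\mu(\psi_1(a_i)))\in V^+$.) Now set $s_i=\mu(\psi_1(a_i))$. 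This is a nonempty word, since endomorphisms here map into $\Sigma^+$. We have $\rho(s_i)=\varphi(a_i)$, so $s_i\in\tau_V^f(\varphi(a_i))$. Moreover, $s_i=\mu(\psi_1(a_i))$ is a concatenation of the words $\mu(a_k)$ for the letters $a_k$ of $\psi_1(a_i)$, so $s_i\in\{\mu(a_1),\dots,\mu(a_n)\}^+$.

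For the ``if'' direction, let $V$, $f$ and the words $s_i$ be given, and let $\rho$ be the endomorphism induced by $V$ and $f$. For each $i$, choose some factorisation $s_i=\mu(a_{k_1})\mu(a_{k_2})\cdots\mu(a_{k_\ell})$ with $\ell\ge 1$; one exists by assumption. There is no code hypothesis on $\{\mu(a_k)\}$, so the factorisation need not be unique, and we simply fix one. Define $\psi_1(a_i)=a_{k_1}\cdots a_{k_\ell}$, so that $\mu(\psi_1(a_i))=s_i$, and put $\psi_2=\rho$. Then
\[(\psi_2\circ\mu\circ\psi_1)(a_i)=\rho(s_i)=\varphi(a_i)\]
for every $i$, and hence $\varphi=\psi_2\circ\mu\circ\psi_1$. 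So $\mu$ is a factor of $\varphi$.

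Neither direction has a real obstacle. The only points that need care are bookkeeping ones. First, when $\psi_2$ is not injective on letters, the set $V$ must be indexed properly through the surjection $f$, so that $\rho=\psi_2$ holds exactly rather than only up to renaming. Second, all the words involved must be nonempty, so that $\psi_1$ is a genuine endomorphism of $\Sigma^+$; this is guaranteed because $\tau_V^f$ consists of words in $\Sigma^+$ and the factorisation of $s_i$ uses $\ell\ge1$ factors. No rank-preservation hypothesis is needed anywhere, which matches the statement being made for arbitrary endomorphisms.
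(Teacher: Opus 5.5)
Your proof is correct and follows essentially the same route as the paper's. For ``only if'' you take $V$ to be the set of images $\psi_2(a_i)$, choose $f$ so that $\psi_2$ is exactly the endomorphism induced by $V$ and $f$, and set $s_i=\mu(\psi_1(a_i))$; for ``if'' you decode each $s_i$ over $\{\mu(a_1),\dots,\mu(a_n)\}$ to define $\psi_1$ and take $\psi_2$ induced by $V$ and $f$, where the paper phrases the same decoding step via $\tau_Z^g$ with $Z=\{\mu(a_1),\dots,\mu(a_n)\}$ and a suitable surjection $g$.
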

\begin{proof} 
    Let $\Sigma=\{a_1, a_2, \dots, a_n\}$. We first prove the ``only if'' direction, i.\,e., we assume that there exist endomorphisms $\psi_1, \psi_2: \Sigma^+ \to \Sigma^+$ satisfying $\varphi = \psi_2 \circ \mu \circ \psi_1$, and let $V=\{\psi_2(a_1),\psi_2(a_2), \dots, \psi_2(a_n)\}$ be a factor basis of $\varphi$. We show that $V$, together with some surjection $f$, has the desired properties. Firstly, we note that the equality $\varphi=\psi_2\circ \mu \circ \psi_1$, and particularly the fact that $\psi_2$ is a left-factor in the factorisation of $\varphi$, directly implies that, for every $i, 1\leq i \leq n$, $\varphi(a_i)$ must be an element of $V^+$, and so $V$ is a factor basis of $\varphi$. Choose a surjection $f:\{1,2, \dots, n\}\rightarrow\{1, 2, \dots, m\}$ such that the endomorphism $\psi_2$ is induced by $V$ and $f$, and therefore is defined by $\psi_2(a_i)=v_{f(i)}$. As a result, there exists a word $s_i\in\tau_V^f(\varphi(a_i))$ such that $\psi_2(s_i)=\varphi(a_i)$. Let $s_i=\mu(\psi_1(a_i))$. Then $s_i\in\{\mu(a_1), \mu(a_2), \dots, \mu(a_n)\}^+$, since $s_i=\mu(\psi_1(a_i))$ and $\psi_2(s_i)=\psi_2(\mu(\psi_1(a_i)))=\varphi(a_i)$, so $s_i\in \tau_V^f(\varphi(a_i))$. Hence, the ``only if'' direction of our theorem holds true.\par
    We now consider the ``if'' direction of the theorem, i.\,e., we assume that there exists a factor basis $V$ of $\varphi$ that satisfies the theorem, and we show that this implies the existence of endomorphisms $\psi_1$ and $\psi_2$ satisfying $\varphi=\psi_2\circ \mu \circ \psi_1$. We define these endomorphisms as follows:
    \begin{itemize}
        \item To determine $\psi_1:\Sigma^+\rightarrow\Sigma^+$, we first consider the factor basis $Z=\{\mu(a_1), \mu(a_2), \dots, \mu(a_n)\}$. Since we postulate that for $i,1\leq i \leq n$ there exists a word $s_i\in\tau_V^f(\varphi(a_i))$ with $s_i\in Z^+$, we may conclude that there exists a surjection $g:\{1, 2, \dots, n\}\rightarrow\{1, 2, \dots, |Z|\}$ there exists a word $s_i'\in\tau_Z^g(s_i)$ satisfying $\mu(s'_i)=s_i$. We then define $\psi_1(a_i)=s_i'$ for $i, 1\leq i \leq n$.
        \item Let $\psi_2:\Sigma^+\rightarrow\Sigma^+$ be the endomorphism induced by the factor basis $V$ and the surjection $f$.
    \end{itemize}
As $V$ is a factor basis of $\varphi$, we know that $\varphi(a_i)=v_{i_1}v_{i_2}\cdots v_{i_{k_i}}$ for some $k_i\geq 1$ and $v_{i_1},v_{i_2},\dots, v_{i_{k_i}}\in V$. Furthermore, our assumptions on $V$ imply that, for all $i$, the endomorphism $\mu$ maps $s_i'$ to a word $s_i=a_{i_1}a_{i_2}\cdots a_{i_{k_i}}$ that is in $\tau_V^f(\varphi(a_i)).$ \par
    We can therefore observe the following: For every $i$, $1 \leq i \leq n$,
    \begin{itemize}
    \item $\psi_1(a_i) = s'_i$,
    \item $\mu(s'_i) = s_i = a_{i_1} a_{i_2} \cdots a_{i_{k_i}}$, 
    \item $\psi_2(a_{i_1} a_{i_2} \cdots a_{i_{k_i}}) = v_{i_1} v_{i_2} \cdots v_{i_{k_i}}$.
    \end{itemize}
    In summary, we have established that $\varphi(a_i) = \psi_2(\mu(\psi_1(a_i)))$ for every $i$ and, hence, that the existence of such a factor basis $V$ of $\varphi$ implies the existence of endomorphisms $\psi_1, \psi_2$ satisfying $\varphi = \psi_2 \circ \mu \circ \psi_1$.     
    \end{proof}
    We now discuss some examples to illustrate Theorem~\ref{thm:char_factor_new} and its proof. These examples focus on applications of Theorem~\ref{thm:char_factor_new} to rank-preserving endomorphisms.
\begin{example}
    Let $\mu, \varphi: \{a,b\}^+ \to \{a,b\}^+$ be the rank-preserving endomorphisms satisfying $\mu(a) = bb$, $\mu(b)= a$, $\varphi(a) = aaaab$, and $\varphi(b) = baaaa$. If we want to use Theorem~\ref{thm:char_factor_new} to establish if $\mu$ is a factor of $\varphi$, we can consider the following potential factor bases $V$ of $\varphi$: $\{a, b\}$, $\{aa, b\}$, $\{aaaa, b\}$ and $\{aaaab, baaaa\}$. We can also observe that $Z = \{\mu(a), \mu(b)\} = \{ bb, a \}$. We choose $V = \{aa, b\}$ and the surjection $f:\{1, 2\}\rightarrow\{1, 2\}$ as $f(1)=2$ and $f(2)=1$, so we can see that the words $bba$ and $abb$ are both in $Z^+$ and satisfy $bba \in \tau_V^f(\varphi(a)) = \tau_V^f(aaaab)$ and $abb \in \tau_V^f(\varphi(b)) = \tau_V^f(baaaa)$. We can, hence, conclude that $\mu$ is a factor of $\varphi$, and the proof of Theorem~\ref{thm:char_factor_new} illustrates that $\varphi = \psi_2 \circ \mu \circ\ \psi_1$, where $\psi_1(a) = ab$, $\psi_1(b) = ba$, $\psi_2(a) = b$, and $\psi_2(b) = aa$. \hfill$\Diamond$
\end{example}
\begin{example}
    Let $\mu, \varphi: \{a,b\}^+ \to \{a,b\}^+$ be the rank-preserving endomorphisms satisfying $\mu(a) = aab$, $\mu(b)= aba$, $\varphi(a) = ababbababbabb$, and $\varphi(b) = abbabbababb$. We now apply Theorem~\ref{thm:char_factor_new} to show that $\mu$ is not a factor of $\varphi$. To this end, we need to consider the set $Z = \{\mu(a),\mu(b)\} = \{aab, aba \}$, and all possible factor bases of $\varphi$.\par
    We first choose $V = \{ab, abb\}$ and then choose our surjection. There exist two surjections $f,g:\{1, 2\}\rightarrow\{1, 2\}$, satisfying $f(1)=1, f(2)=2$ or $g(1)=2, g(2)=1$. Considering each surjection, we obtain $\tau_V^f(\varphi(a)) = \{ababb\}$ and $\tau_V^f(\varphi(b)) = \{bbab\}$, or $\tau_V^g(\varphi(a)) = \{babaa\}$ and $\tau_V^g(\varphi(b)) = \{aaba\}$. We note that neither surjection yields two words that are in $Z^+$, and therefore this choice of $V$ does not satisfy the theorem.\par
    Continuing with $V = \{ ababb, abb\}$, the surjection $f$ (as above) leads to $\tau_V^f(\varphi(a)) = \{aab\}$ and $\tau_V^f(\varphi(b)) = \{bba\}$. In this case, while $aab$ is indeed in $Z^+$, we see that $bba$ is not. The surjection $g$ (as above) implies that $\tau_V^g(\varphi(a)) = \{bba\}$ and $\tau_V^g(\varphi(b)) = \{aab\}$. Hence, we can again note that $bba$ is not in $Z^+$, and therefore our choice of $V$ does not satisfy the theorem.\par
    If $V = \{ab, b\}$, the surjection $f$ implies that $\tau_V^f(\varphi(a)) = \{aabaabab\}$ and $\tau_V^f(\varphi(b)) = \{ababaab\}$. However, neither of these words is in $Z^+$, and the same is true for the surjection $g$, which implies $\tau_V^g(\varphi(a)) = \{bbabbaba\}$ and $\tau_V^g(\varphi(b)) = \{bababba\}$.\par
    The final two candidates are the trivial factor bases of $\varphi$. For $V = \{ \varphi(a), \varphi(b) \}$ it is obvious that $\tau_V^f(\varphi(a))=\{a\}$ and $\tau_V^g(\varphi(a))=\{b\}$, but neither of these words is in $Z^+$. Lastly, if $V = \{a,b\}$, the surjection $f$ $V$ means that $\tau_V^f(\varphi(a)) = \{\varphi(a)\}$ and $\tau_V^f(\varphi(b)) = \{\varphi(b)\}$. However, since, e.\,g., $\varphi(a) = ababbababbabb$ is not in $Z^+$, theorem is not satisfied, and the surjection $g$ does not change this outcome.\par
    These are all possible choices of $V$ of $\varphi$, and none of them satisfy the theorem. We can therefore conclude that $\mu$ is not a factor of $\varphi$. \hfill$\Diamond$
\end{example}
While the decision procedure implied by the proof of Theorem 2 appears complex, the above examples demonstrate that it has a search space that is potentially far smaller than an exhaustive search over all possible
endomorphisms $\psi_1, \psi_2$.
\section{Uniqueness of Factorisations}\label{sec:uniqueness}
In the present section, we further explore the non-uniqueness of the factorisation of a rank-preserving endomorphism into its irreducible components, as discussed in Example \ref{ex: thm 1 and non-uniquness} and Example \ref{ex:factor tree 2}. We identify two forms of non-uniqueness:  non-uniqueness of the irreducible factors, and uniqueness of the irreducible factors but non-uniqueness of factorisations. We shall focus on the non-uniqueness of factors.\par
For a rank-preserving endomorphism $\varphi:\Sigma^+\rightarrow\Sigma^+$ to have non-unique, irreducible factors, the endomorphism $\varphi$ admits distinct maximally derived factor bases (see Proposition \ref{prop:derivedVset}). This occurs when for all $a_i\in\Sigma$, the image $\varphi(a_i)$ lies in the intersection $V_1^+\cap V_2^+$ of the free semigroups generated by two distinct factor bases. For a rank-preserving endomorphism to have unique irreducible factors but a non-unique factorisation into irreducible components, we investigate when endomorphisms commute under composition. \par
A similar question is investigated by Karhum\"{a}ki \cite{Kar84}, who considers intersections of the form $\{x, y\}^*\cap \{u,v\}^*$ for $x, y, u, v\in\Sigma^*$, notably characterising the words that lie in the intersection of monoids of rank 2. We study a related but distinct question: given a rank-preserving endomorphism $\varphi$ inducing the set of words $W=\{\varphi(a_1), \varphi(a_2), \dots, \varphi(a_n)\}$, when do there exist distinct factor bases $V_1, V_2$ of size $n$ such that for all $a_i\in\Sigma$, the morphic images $\varphi(a_i)$ satisfy $\varphi(a_i)\in V_1^+\cap V_2^+$? We provide some conditions for this phenomenon.\par
We begin with some immediate observations on the number of irreducible endomorphisms.
\begin{proposition}\label{prop:unaryirreducible}
         Every rank-preserving endomorphism over a unary alphabet has a unique factorisation into irreducible elements up to reordering.
\end{proposition}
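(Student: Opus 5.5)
Over a unary alphabet $\Sigma=\{a\}$ every endomorphism has the form $\varphi(a)=a^k$ for some $k\in\mathbb{N}$; write $\varphi_k$ for it. The plan is to identify the monoid of endomorphisms of $\{a\}^+$ with the multiplicative monoid $(\mathbb{N},\cdot)$ and read off the statement from unique prime factorisation in $\mathbb{N}$.

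First, we check that every $\varphi_k$ is rank-preserving. Since $|\Sigma|=1$, a factor basis $V$ with $|V|<|\Sigma|$ would be empty, and then $V^+=\emptyset$ could not contain $\varphi(a)$. Hence the rank-preserving endomorphisms are exactly all $\varphi_k$. The only automorphism is $\varphi_1=\mathrm{id}$, because a renaming must send $a$ to a word of length $1$.

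Next, we compute compositions: $(\varphi_\ell\circ\varphi_k)(a)=\varphi_\ell(a^k)=a^{k\ell}$, so $\varphi_\ell\circ\varphi_k=\varphi_{k\ell}$. The map $k\mapsto\varphi_k$ is therefore a monoid isomorphism from $(\mathbb{N},\cdot)$ onto the endomorphism monoid, and it sends $1$ to $\mathrm{id}$. In particular this monoid is commutative, which is why uniqueness holds only ``up to reordering''.

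It follows that $\varphi_k$ is reducible exactly when $k=k_1k_2$ with $k_1,k_2>1$. Thus $\varphi_k$ is irreducible if and only if $k$ is prime. For $k=1$, $\varphi_1$ is the automorphism and has the empty factorisation.

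Consequently, a factorisation $\varphi_k=\varphi_{p_1}\circ\cdots\circ\varphi_{p_r}$ into irreducible factors corresponds precisely to a factorisation $k=p_1\cdots p_r$ into primes. One should also note that automorphism factors can only be $\mathrm{id}$, so no renaming ambiguity arises. Existence and uniqueness up to reordering then follow directly from the Fundamental Theorem of Arithmetic.

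There is no real obstacle here. The only point needing care is the definitional bookkeeping: confirming that rank-preservation is automatic on a unary alphabet, and that the identity is the sole automorphism. This matters because it ensures that ``unique up to reordering'' is not further weakened by composing with nontrivial renamings.
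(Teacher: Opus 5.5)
Your proof is correct and follows essentially the paper's route. You identify $\varphi_k$ with $k$, so that composition becomes multiplication in $(\mathbb{N},\cdot)$ and irreducibles become primes, exactly as in the paper's subsequent corollary; uniqueness up to reordering is then the Fundamental Theorem of Arithmetic. One point deserves to be stated explicitly rather than left implicit: under the paper's literal definition $\mathrm{id}=\varphi_1$ is not reducible, so automorphisms must be excluded from the irreducibles (as you and the corollary tacitly do), since otherwise inserting copies of $\mathrm{id}$ would destroy uniqueness.
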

Thus, we can see that in the unary alphabet, there are infinitely many irreducible endomorphisms.
\begin{corollary}
    The rank-preserving,  irreducible endomorphisms $\varphi:\{a\}^+\rightarrow\{a\}^+$ correspond to those satisfying $\varphi(a)=a^p$, where $p$ is prime.
\end{corollary}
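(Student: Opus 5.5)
The plan is to reduce everything to the arithmetic of exponents. Every endomorphism of $\{a\}^+$ has the form $\varphi(a)=a^k$ with $k\in\mathbb{N}$. Over a unary alphabet every such endomorphism is rank-preserving, since a set $V$ with $|V|<1$ is empty and cannot generate anything. Composition corresponds to multiplication of exponents: if $\psi_1(a)=a^{k_1}$ and $\psi_2(a)=a^{k_2}$, then $(\psi_2\circ\psi_1)(a)=\psi_2(a^{k_1})=a^{k_1k_2}$. The only automorphism is $\mathrm{id}$, namely $k=1$, because a renaming must map $a$ to a single symbol. Hence $k\mapsto(a\mapsto a^k)$ is an isomorphism from the multiplicative monoid $(\mathbb{N},\cdot)$ onto the endomorphism monoid, and it sends the unit $1$ to the unique automorphism.

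Next, characterise irreducibility. By definition, $\varphi(a)=a^p$ is reducible exactly when $p=k_1k_2$ with $k_1,k_2>1$, that is, when $p$ is composite. Alternatively, one can use Theorem~\ref{thm:d-p reducible}: a factor basis of $\{a^p\}$ is a singleton $\{a^d\}$ with $d\mid p$, and it is nontrivial exactly when $1<d<p$. So the irreducible endomorphisms are those with $p$ prime, which gives the corollary as well.

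Then prove existence and uniqueness of factorisations. For $\varphi(a)=a^k$ with $k>1$, write $k=p_1p_2\cdots p_r$ as a product of primes. Then $\varphi=\varphi_{p_1}\circ\cdots\circ\varphi_{p_r}$, where $\varphi_p(a)=a^p$, and each factor is irreducible by the previous step. Conversely, any factorisation of $\varphi$ into irreducible endomorphisms $\varphi_{q_1}\circ\cdots\circ\varphi_{q_s}$ satisfies $q_1\cdots q_s=k$ with every $q_j$ prime. By the Fundamental Theorem of Arithmetic, the multiset $\{q_j\}$ equals $\{p_i\}$, so the factorisation is unique up to reordering. Since composition is commutative here, every reordering is indeed realised.

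The main subtlety is the convention for $\varphi=\mathrm{id}$ (that is, $k=1$), which admits only the empty factorisation, and the treatment of automorphism factors. Since $\mathrm{id}$ is the only automorphism and composing with it changes nothing, these factors are discarded, exactly as units are in integer factorisation. Apart from this bookkeeping, the argument is immediate.
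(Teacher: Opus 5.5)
Your proposal is correct and follows essentially the paper's route: the corollary is obtained from Proposition~\ref{prop:unaryirreducible}, which rests on the same identification of $\varphi(a)=a^k$ with $k\in(\mathbb{N},\cdot)$, so that the unique automorphism is $k=1$, the irreducibles are the primes, and uniqueness of factorisation is the Fundamental Theorem of Arithmetic. There are two minor caveats. First, the paper calls a rank-preserving endomorphism irreducible merely when it is not reducible, so the identity ($k=1$) is literally irreducible; your step from ``not composite'' to ``prime'' therefore needs the unit convention, which you only hint at in your last paragraph and should state explicitly. Second, your alternative via Theorem~\ref{thm:d-p reducible} is formally outside its hypothesis $|\Sigma|>1$; its conclusion does hold in the unary case, but the direct argument is the one to rely on.
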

 As we have shown that there are infinitely many rank-preserving irreducible endomorphisms over a unary alphabet, we can make similar observations for larger alphabets by defining a natural embedding of endomorphisms.    \begin{proposition}\label{prop:morphism embedding}
    Let $\Sigma_1, \Sigma_2$ be alphabets satisfying $\Sigma_1\subset \Sigma_2$. If $\varphi_1:\Sigma_1^+\rightarrow\Sigma_1^+$ is a rank-preserving, reducible endomorphism, then there exists a rank-preserving, reducible endomorphism $\varphi_2:\Sigma_2^+\rightarrow\Sigma_2^+$ such that $\varphi_2(a_i)=\varphi_1(a_i)$ for all $a_i\in\Sigma_1$ and $\varphi_2(a_j)=a_j$ for all $a_j\in\Sigma_2\setminus\Sigma_1$.
\end{proposition}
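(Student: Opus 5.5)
The plan is to take a nontrivial factorisation of $\varphi_1$ and extend both factors by the identity on the new letters. By the definition of reducibility, write $\varphi_1=\psi_2\circ\psi_1$ with $\psi_1,\psi_2:\Sigma_1^+\rightarrow\Sigma_1^+$ rank-preserving and not automorphisms. Define $\varphi_2$ as in the statement. Define $\widehat{\psi}_k:\Sigma_2^+\rightarrow\Sigma_2^+$ for $k=1,2$ by $\widehat{\psi}_k(a_i)=\psi_k(a_i)$ for $a_i\in\Sigma_1$ and $\widehat{\psi}_k(a_j)=a_j$ for $a_j\in\Sigma_2\setminus\Sigma_1$. Then $\widehat{\psi}_2\circ\widehat{\psi}_1=\varphi_2$ holds letter by letter, since $\psi_1(a_i)\in\Sigma_1^+$ and $\widehat{\psi}_2$ agrees with $\psi_2$ there. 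Neither $\widehat{\psi}_k$ is an automorphism: $\psi_k$ is injective, as it is rank-preserving, so some $\psi_k(a_i)$ with $a_i\in\Sigma_1$ has length at least $2$, and this persists in the extension.

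It then remains to prove a single lemma, applied to $\varphi_1$, $\psi_1$ and $\psi_2$: if $\theta:\Sigma_1^+\rightarrow\Sigma_1^+$ is rank-preserving, then its identity extension $\widehat{\theta}$ is rank-preserving. By the restated Definition~\ref{def:rank-preserving}, I must show that every factor basis $V$ of $\widehat{\theta}$ has $|V|\geq|\Sigma_2|$. Let $V$ be a factor basis, and for each $a_j\in\Sigma_2\setminus\Sigma_1$ note that $a_j\in V^+$. This forces $a_j\in V$, since $a_j$ is a single letter. Let $V_1=V\cap\Sigma_1^+$. Each $\theta(a_i)\in\Sigma_1^+$ decomposes as a product of elements of $V$, and every factor in that product contains only letters of $\Sigma_1$, so $\theta(a_i)\in V_1^+$.

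So $V_1$ is a factor basis of $\theta$, giving $|V_1|\geq|\Sigma_1|$. The sets $V_1$ and $\Sigma_2\setminus\Sigma_1$ are disjoint subsets of $V$, hence $|V|\geq|\Sigma_1|+|\Sigma_2\setminus\Sigma_1|=|\Sigma_2|$. One bookkeeping point: Definition~\ref{def:factor basis} requires $|V_1|\leq|\Sigma_1|$ for $V_1$ to count as a factor basis. If $|V_1|>|\Sigma_1|$, the desired inequality holds anyway, and otherwise the rank-preserving property of $\theta$ applies directly.

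The only step needing care is this lemma, specifically the claim that the letters of $\Sigma_2\setminus\Sigma_1$ must themselves belong to $V$ and that the images over $\Sigma_1$ cannot use elements of $V$ containing new letters. Both follow from the single-letter and symbol-set arguments above. Applying the lemma to $\varphi_1$, $\psi_1$ and $\psi_2$ shows that $\varphi_2$ is rank-preserving and that $\widehat{\psi}_1$ and $\widehat{\psi}_2$ are rank-preserving non-automorphisms. Hence $\varphi_2$ is reducible.
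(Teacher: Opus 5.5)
Your proof is correct and complete. The extended factors compose to $\varphi_2$, and the non-automorphism claim follows from injectivity of rank-preserving endomorphisms. Your lemma handles the two points that need care:
\begin{itemize}
\item new letters must lie in any factor basis $V$, because they are single letters;
\item factorisations of $\theta(a_i)\in\Sigma_1^+$ over $V$ can only use elements of $V\cap\Sigma_1^+$.
\end{itemize}
The disjointness count then gives $|V|\geq|\Sigma_2|$. Your treatment of the side condition $|V_1|\leq|\Sigma_1|$ is also right. You could avoid that condition altogether by arguing from the original form of Definition~\ref{def:rank-preserving}: there, $|V|<|\Sigma_2|$ directly forces $|V_1|<|\Sigma_1|$.

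The paper defers its proof of this proposition to the full version, so I cannot compare line by line. The route its own machinery suggests goes through Theorem~\ref{thm:d-p reducible}. Take a nontrivial factor basis $V$ of $\varphi_1$ and set $V'=V\cup(\Sigma_2\setminus\Sigma_1)$. Then $V'$ is a factor basis of $\varphi_2$ with $V'\cap\Sigma_1^+=V$, so $V'$ differs both from $\Sigma_2$ and from $\{\varphi_2(a)\mid a\in\Sigma_2\}$. Since $|\Sigma_2|>1$, the theorem gives reducibility once $\varphi_2$ is known to be rank-preserving.

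That version uses your lemma only once, for $\varphi_2$, and leaves rank-preservation of the two factors to the theorem. Your version works directly from the definition of reducibility and applies the lemma three times. In exchange, it exhibits the factorisation explicitly and does not depend on Theorem~\ref{thm:d-p reducible} at all.
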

\begin{corollary}\label{cor: irr rank preserving}
    For any alphabet $\Sigma$, the set of rank-preserving irreducible endomorphisms over $\Sigma$ is infinite.
\end{corollary}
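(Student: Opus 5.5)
The plan is an explicit embedding of the unary primes into $\Sigma$. Let $\Sigma=\{a_1,a_2,\dots,a_n\}$. The corollary preceding Proposition~\ref{prop:morphism embedding} shows that $a\mapsto a^p$, $p$ prime, is irreducible over $\{a\}$. The unary case $n=1$ is therefore immediate, since there are infinitely many primes. For $n>1$ I would not try to push irreducibility through Proposition~\ref{prop:morphism embedding}, because that proposition transfers reducibility, not irreducibility. Instead, for each prime $p$ I would define $\varphi_p:\Sigma^+\rightarrow\Sigma^+$ by $\varphi_p(a_1)=a_1^p$ and $\varphi_p(a_j)=a_j$ for $j\geq 2$. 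These are pairwise distinct, so it suffices to show each $\varphi_p$ is rank-preserving and irreducible.

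\emph{Rank-preserving.} Let $V$ be a factor basis with $|V|\leq n$. Each $a_j$ with $j\geq 2$ occurs as the whole image $\varphi_p(a_j)=a_j$, so the only way to write it over $V^+$ is $a_j\in V$. Also $a_1^p\in V^+$ forces some power $a_1^k$ to lie in $V$. That gives $n$ distinct elements, so $|V|=n$ and Definition~\ref{def:rank-preserving} holds.

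\emph{Irreducible.} By Theorem~\ref{thm:d-p reducible} it suffices to show every factor basis is trivial. The argument above gives $V=\{a_1^k,a_2,\dots,a_n\}$ for some $k$ with $a_1^p\in\{a_1^k\}^+$, that is, $k\mid p$. Hence $k=1$ or $k=p$, which gives $V=\Sigma$ or $V=W$ respectively.

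The only delicate point is making sure $V$ contains no other words that could make $a_1^p$ decomposable in a different way. Since $|V|=n$ and $n-1$ of its slots are already forced to be $a_2,\dots,a_n$, exactly one element remains. That element must generate $a_1^p$ on its own, because the other elements contain no $a_1$, so it must be a power of $a_1$ dividing $p$. With this counting settled, the result follows.
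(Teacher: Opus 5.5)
Your proof is correct. It uses the same family of examples as the paper: the unary primes $a\mapsto a^p$ extended by the identity on the other letters, which is the ``natural embedding'' of Proposition~\ref{prop:morphism embedding}. The paper obtains the corollary from the unary corollary together with that proposition. You instead verify rank-preservation and irreducibility directly, using Theorem~\ref{thm:d-p reducible}.

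You are right that Proposition~\ref{prop:morphism embedding}, as stated, only lifts \emph{reducibility}. Getting the corollary from it therefore needs the converse direction: if the extension is reducible, then so is the original. The paper's omitted proof would have to supply that converse. Your counting argument supplies it for the prime family. The fixed letters $a_2,\dots,a_n$ are forced into every factor basis, which leaves one slot that must be $a_1^k$ with $k\mid p$.

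The same counting proves the converse in general. Every factor basis of the extension has the form $V_1\cup(\Sigma_2\setminus\Sigma_1)$ with $V_1$ a factor basis of $\varphi_1$, and this form preserves both rank and triviality. So the embedding route, once repaired, gives the stronger statement that every irreducible endomorphism of a sub-alphabet lifts to one of $\Sigma$.

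Your version is more direct and self-contained. Its only extra ingredient is the separate treatment of $n=1$, which you correctly include because Theorem~\ref{thm:d-p reducible} assumes $|\Sigma|>1$.
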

Next, we consider non-uniqueness of the irreducible factors of a binary, rank-preserving endomorphism. We begin by formalising the phenomenon seen in Example \ref{ex: thm 1 and non-uniquness}.
\begin{proposition}\label{prop:aba subword}
        Let  $\varphi:\Sigma^+\rightarrow\Sigma^+$ be a rank-preserving endomorphism and let $\psi_1, \psi_2, \dots, \psi_n:\Sigma^+\rightarrow\Sigma^+$ be irreducible endomorphisms for $\Sigma=\{a,b\}$ that satisfy $\varphi=\psi_n\circ \psi_{n-1} \circ \cdots\circ\psi_1$. The factorisation of $\varphi$ is not unique if there exists an endomorphism $\mu:\Sigma^+\rightarrow\Sigma^+$ satisfying $\mu(a)\in a^+(ba^+)^+$ and $\mu(b)\in a^+(ba^+)^*$, where $\mu$ is a factor of $\varphi$.
    \end{proposition}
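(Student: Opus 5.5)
Since $\mu$ is a factor of $\varphi$, I would write $\varphi=\psi_2'\circ\mu\circ\psi_1'$ for some endomorphisms $\psi_1',\psi_2'$, and then exhibit two distinct maximally derived factor bases of $\varphi$, modelled on Example~\ref{ex: thm 1 and non-uniquness}. The key observation is that $\mu$ itself factors in two different ways. Write $\mu(a)=a^{k_0}ba^{k_1}b\cdots ba^{k_r}$ with $r\geq 1$ and all $k_j\geq 1$, and similarly $\mu(b)=a^{l_0}ba^{l_1}\cdots ba^{l_s}$ with $s\geq 0$ and all $l_j\geq 1$. Every occurrence of $b$ in these words is immediately preceded by an $a$ and immediately followed by an $a$. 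Hence $\mu(a),\mu(b)\in\{a,ab\}^+$, by grouping each $b$ with the preceding $a$, and also $\mu(a),\mu(b)\in\{a,ba\}^+$, by grouping each $b$ with the following $a$. Consequently $\mu=\alpha\circ\beta=\gamma\circ\delta$, where $\alpha(a)=a,\alpha(b)=ab$, $\gamma(a)=a,\gamma(b)=ba$, and $\beta,\delta$ are the decodings. Then
\[
\varphi=\psi_2'\circ\alpha\circ\beta\circ\psi_1'=\psi_2'\circ\gamma\circ\delta\circ\psi_1'.
\]

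The steps, in order, are as follows.

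\begin{enumerate}[(i)]
\item Verify that $\alpha,\gamma,\beta,\delta$ are rank-preserving. For $\alpha$ and $\gamma$ this holds because $\{a,ab\}$ and $\{a,ba\}$ are codes of size $2$ that are not powers of a common word. For $\beta$ and $\delta$ it follows because $\varphi$ is rank-preserving, which forces every factor in any factorisation to be rank-preserving (otherwise some factor basis of $\varphi$ would have size $<2$). Also check that $\alpha$ and $\gamma$ are irreducible. By Theorem~\ref{thm:d-p reducible} this amounts to showing that $\{a,ab\}$ admits only trivial factor bases, which is a short check: any basis must contain $a$, and the other element must then be $b$ or $ab$.

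\item Refine the two resulting factorisations into irreducible components. Since $\varphi$ is rank-preserving, lengths strictly grow with each non-automorphism, so any refinement terminates. This yields two complete factorisations of $\varphi$: one whose leftmost factor from the $\mu$ block is (equivalent to) $\alpha$, and one where it is $\gamma$.

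\item Show that these two factorisations are genuinely different, i.e. not related merely by inserting automorphisms (Proposition~\ref{prop:equivalencereducible}). I would compare the factor bases $V_\alpha=\psi_2'(\{a,ab\})$ and $V_\gamma=\psi_2'(\{a,ba\})$ of $\varphi$. These are distinct whenever $\psi_2'$ is injective, which holds as $\psi_2'$ is rank-preserving. An alternative comparison uses the reversal symmetry of Proposition~\ref{prop:reversalreducible}, since $\gamma=\alpha^r$.
\end{enumerate}

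The main obstacle is step (iii). One must argue that the two irreducible factorisations cannot coincide after refinement, because a priori $\psi_2'\circ\alpha$ and $\psi_2'\circ\gamma$ might refine to the same sequence of irreducibles. I would first try to show that the last (leftmost) irreducible factors differ, using the Hasse diagram of factor bases and Proposition~\ref{prop:derivedVset}. The goal is that $V_\alpha$ and $V_\gamma$ lie below incomparable maximally derived bases, because $\{a,ab\}^+$ and $\{a,ba\}^+$ are incomparable: $ab\notin\{a,ba\}^+$ and $ba\notin\{a,ab\}^+$. Failing a clean argument in general, I would settle for non-uniqueness at the level of the block decomposition $\psi_2'\circ(\cdot)\circ\psi_1'$, which is likely the sense intended by the statement.
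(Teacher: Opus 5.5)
Your core construction is the right one, and it is exactly the phenomenon of Example~\ref{ex: thm 1 and non-uniquness} that the proposition formalises. Every $b$ in $\mu(a),\mu(b)$ is flanked by $a$'s, so $\mu=\alpha\circ\beta=\gamma\circ\delta$, where $\alpha,\gamma$ are the irreducible endomorphisms with image sets $\{a,ab\}$ and $\{a,ba\}$. This is then pushed through $\varphi=\psi_2'\circ\mu\circ\psi_1'$. Steps (i) and (ii) are fine.

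\textbf{The gap is step (iii).} You leave it open, and the route you intend to take there fails. The leftmost irreducible factors of the two refined factorisations need not differ, and $V_\alpha,V_\gamma$ need not lie below incomparable maximally derived factor bases. Take $\psi_2'(a)=a$, $\psi_2'(b)=b^2$, $\mu(a)=aba$, $\mu(b)=a$, $\psi_1'=\mathrm{id}$, so that $\varphi(a)=ab^2a$ and $\varphi(b)=a$.
\begin{itemize}
\item The factor bases of $\varphi$ are $\{a,b\}$, $\{a,b^2\}$, $\{a,ab^2\}$, $\{a,b^2a\}$ and $\{a,ab^2a\}$. So $\{a,b^2\}$ is the only maximally derived nontrivial one.
\item $V_\alpha=\{a,ab^2\}$ and $V_\gamma=\{a,b^2a\}$ both sit below it, and every complete factorisation of $\varphi$ starts with an equivalent of $\psi_2'$.
\item Here $\beta(a)=ba$, $\beta(b)=a$ and $\delta(a)=ab$, $\delta(b)=a$, so $\beta\cong\gamma$ and $\delta\cong\alpha$. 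The complete factorisations $\psi_2'\circ\alpha\circ\beta$ and $\psi_2'\circ\gamma\circ\delta$ therefore have the same irreducible factors up to equivalence, only in a different order.
\end{itemize}
Consequently, no argument via distinct maximal factor bases, via different multisets of factors, or via $V_\alpha\neq V_\gamma$ can work in general. The last of these is true but says nothing about complete factorisations. Reversal symmetry and Proposition~\ref{prop:equivalencereducible} do not address the issue either.

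\textbf{What is missing} is a positional comparison of ordered factorisations up to inserted automorphisms. The statement concerns non-uniqueness of the factorisation, which in the terminology of Section~\ref{sec:uniqueness} includes a mere change of order.
\begin{itemize}
\item Fix one complete factorisation $\pi_1\circ\cdots\circ\pi_k$ of $\psi_2'$ and use it in both. If $\psi_2'$ is an automorphism, take $k=0$ and absorb it into $\alpha$, resp.\ $\gamma$; this does not change their equivalence classes.
\item Refine $\beta\circ\psi_1'$ and $\delta\circ\psi_1'$ arbitrarily. This gives $\varphi=\pi_1\circ\cdots\circ\pi_k\circ\alpha\circ\rho_1\circ\cdots\circ\rho_\ell=\pi_1\circ\cdots\circ\pi_k\circ\gamma\circ\rho'_1\circ\cdots\circ\rho'_{\ell'}$.
\item Suppose these were the same factorisation. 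Then $\ell=\ell'$, and corresponding factors satisfy $\chi'_j=\tau_{j-1}\circ\chi_j\circ\tau_j^{-1}$ for automorphisms $\tau_j$ with trivial outer ones. Comparing the $(k+1)$-st factors gives $\gamma=\tau_k\circ\alpha\circ\tau_{k+1}^{-1}$.
\item This is impossible. For automorphisms $\sigma,\sigma'$, the image set of $\sigma\circ\alpha\circ\sigma'$ is $\sigma(\{a,ab\})\in\{\{a,ab\},\{b,ba\}\}$, never $\{a,ba\}$. Hence $\alpha\not\cong\gamma$.
\end{itemize}
So the two factorisations are inequivalent, and at least one of them differs from the given $\psi_n\circ\cdots\circ\psi_1$. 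With this step in place of your planned one, your proof is complete.
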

    After we have given sufficient conditions for reducible rank-preserving endomorphisms with non-unique factorisations, we now establish a characteristic condition for a large class of rank-preserving endomorphisms over a binary alphabet to have a unique factorisation into irreducible factors, motivated by the following examples.
\begin{example}
Consider the rank-preserving endomorphism $\varphi:\{a,b\}^+\rightarrow\{a,b\}^+$ that satisfies $\varphi(a)=a^6b^5a^4b^5$ and $\varphi(b)=a^4b^{10}a^4$. We ask if $\varphi$ has a unique factorisation into irreducible factors by considering factor bases of $\varphi$. For $\{v_1, v_2\}$ to be a factor basis of $\varphi$, without loss of generality $v_1$ must be a non-empty prefix of $\varphi(a)$. We first consider when $v_1=a^k$, for $k, 1\leq k \leq 4$. If $k=1$, then all strings of $a$ may be generated by $v_1$, and so we consider how we may generate the strings of $b$. Note that the strings of $b$ have lengths that are multiples of $5$, so $\varphi(a), \varphi(b)\in\{a, b^5\}^+$ and we have found a nontrivial factor basis of $\varphi$. If $k=2$, then we consider that all strings of $a$ have even length, so $\varphi(a), \varphi(b)\in \{a^2, b\}^+$ and we have found a nontrivial factor basis of $\varphi$. We can see by Proposition \ref{prop:derivedVset} that neither $\{a^2, b\}$ nor $\{a, b^5\}$ are more derived than the other---in fact, these factor bases are both maximally derived, and we have shown that $\varphi$ does not have a unique factorisation into irreducible rank-preserving endomorphisms. \hfill $\Diamond$
\end{example}
\begin{example}\label{ex:unique code}
    Consider the rank-preserving endomorphism $\varphi:\Sigma^+\rightarrow\Sigma^+$ for $\Sigma=\{a,b\}$ which satisfies $\varphi(a)=a^2b^5a^3b^5a^2$ and $\varphi(b)=a^5b^5a^3b^5a^7$. We can see that $\varphi$ admits the factor basis $V=\{a, b^5\}$, and investigate if this maximally derived factor basis is unique. To form a factor basis $\{v_1, v_2\}$, we see, without loss of generality, that $v_1$ must be a non-empty prefix of $\varphi(a)$. Exhaustively considering such prefixes of $\varphi(a)$, we see that factor bases of $\varphi$ must satisfy $V_1=\{w_1a^{\alpha_1}, a^{\alpha_2}w_2\}$, $V_2=\{w_3b^{\beta_1}, b^{\beta_2}w_4\}$ or $V_3=\{w_5a^{\gamma_1}, b^{\gamma_2}w_6\}$ for $w_1, w_2, \dots, w_6\in\Sigma^*$, where $\alpha_1, \alpha_2\notin\{2, 3\}, \beta_1+\beta_2=5, \gamma_1\in\{2,3\}$ and $\gamma_2=5$. If $\varphi$ admits any factor basis of the form $V_1$, then $V_1$ is less derived than $V$. We can see this as whenever a $b$ occurs in either element of the factor basis, it must do so as part of a string of length 5 and so $V_1$ does not correspond to another factorisation. If $\varphi$ admits a factor basis of the form $V_2$, then it follows that we cannot generate $\varphi(b)$ from this factor basis, and the same if $\varphi$ admits the factor basis $V_3$, and so the factorisation of $\varphi$ into irreducible components is unique. \hfill $\Diamond$
\end{example}
From these examples, we can see that considering where the boundaries of the elements of the factor bases lie in the morphic images of our rank-preserving endomorphisms can allow us to assert that the factor bases are non-unique. Thus, these examples can be formalised into the following proposition.
\begin{proposition}\label{prop:uniquely reducible}
    Let $\varphi:\Sigma^+\rightarrow\Sigma^+$ be a rank-preserving endomorphism for $\Sigma=\{a,b\}$. Let $\varphi$ satisfy 
        $\varphi(a)= a^{x_1} b^{y_1} a^{x_2} b^{y_2} \cdots a^{x_n} b^{y_n} a^{x_{n+1}},
        \varphi(b) = a^{t_1} b^{z_1} a^{t_2} b^{z_2} \cdots  a^{t_m} b^{z_m} a^{t_{m+1}}$,
 with $x_i, y_i', t_j, z_j'\in\mathbb{N}_{\geq 2}$ for $1\leq i \leq n+1, 1\leq i'\leq n$, $1\leq j \leq m+1, 1\leq j'\leq m$. Let $\gcd(x_1, \dots, x_{n+1}, t_1, \dots, t_{m+1})=k_1$ for $k_1\neq 1$, and $\gcd(y_1, \dots, y_n, z_1, \dots, z_m)=k_2$ for $k_2\neq 1$. It follows that $\varphi$ is uniquely reducible into a composition of irreducible factors if and only if:
    \begin{enumerate}[i)]
        \item $k_1=p_1^{\ell_1}$ for prime $p_1$ and $\ell_1\in\mathbb{N}$, $k_2=1$ and $\varphi(a), \varphi(b)\notin\{a^{\alpha_1}v_1a^{\alpha_2}, a^{\alpha_3}v_2\}^+ $ for any $v_1,v_2\in\Sigma^+$ and $\alpha_1, \alpha_2, \alpha_3\in\mathbb{N}$, or
        \item $k_1=1$, $k_2=p_2^{\ell_2}$ for prime $p_2$, $\ell_2\in\mathbb{N}$ and $\varphi(a), \varphi(b)\notin\{a^{\alpha_1}v_1b^{\beta_1}, b^{\beta_2}v_2a^{\alpha_2}\}^+$ for any $v_1, v_2\in\Sigma^+$ and $\alpha_1, \alpha_2,  \beta_1, \beta_2\in\mathbb{N}$.
    \end{enumerate}
    \end{proposition}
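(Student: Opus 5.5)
The plan is to read the statement with the intended hypothesis that $k_1$ and $k_2$ are not both equal to $1$. As literally written, the standing assumptions $k_1\neq 1$ and $k_2\neq 1$ contradict clauses i) and ii). The proof will classify all factor bases of $\varphi$ by where their element boundaries fall in the run-length decomposition of $\varphi(a)$ and $\varphi(b)$, in the spirit of Example~\ref{ex:unique code}. Then Theorem~\ref{thm:d-p reducible} and Proposition~\ref{prop:derivedVset} translate the Hasse diagram of factor bases into the set of factorisations into irreducibles.

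\textbf{Step 1 (classification).} Let $V=\{v_1,v_2\}$ be a factor basis; it is a code because $\varphi$ is rank-preserving. Since every run in $\varphi(a)$ and $\varphi(b)$ has length at least $2$, each boundary between consecutive factors in a $V$-decomposition lies either inside a run or at a junction $a|b$ or $b|a$. I will show that $V$ falls into one of three kinds.
\begin{itemize}
\item \emph{Block type:} $V=\{a^d,b\}$ with $d\mid k_1$, or $V=\{a,b^e\}$ with $e\mid k_2$, or a less derived basis from which one of these is obtainable.
\item \emph{Boundary-crossing type:} shapes such as $\{a^{\alpha_1}v_1a^{\alpha_2},a^{\alpha_3}v_2\}$ or $\{a^{\alpha_1}v_1b^{\beta_1},b^{\beta_2}v_2a^{\alpha_2}\}$, where an element starts or ends in the interior of a run.
\item \emph{Trivial:} $\Sigma$ or $W$.
\end{itemize}
The arguments will be the run-length and gcd reasoning of the two preceding examples. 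If some element is a pure power $a^d$, then every $a$-run must be a multiple of $d$ unless the other element absorbs run pieces. The latter forces that element to begin and end inside runs, which is exactly the boundary-crossing case.

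\textbf{Step 2 (block bases form a chain exactly in the prime-power case).} Suppose $k_2=1$. Then the block bases are the $\{a^d,b\}$ with $d\mid k_1$. Moreover, $\{a^{d},b\}$ is more derived than $\{a^{d'},b\}$ precisely when $d\mid d'$. This reduces to the unary situation of Proposition~\ref{prop:unaryirreducible} and its corollary, via the embedding idea of Proposition~\ref{prop:morphism embedding}. The divisor lattice of $k_1$ is a chain if and only if $k_1=p_1^{\ell_1}$. Otherwise two distinct primes $p\neq q$ dividing $k_1$ give incomparable maximal bases $\{a^p,b\}$ and $\{a^q,b\}$, hence non-unique irreducible factors. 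The case $k_1=1$ is symmetric, with roles exchanged. If both $k_1,k_2>1$, the bases $\{a^{p},b\}$ and $\{a,b^{q}\}$ are incomparable, as in the example preceding Example~\ref{ex:unique code}. This gives non-uniqueness, matching the necessity of $k_1=1$ or $k_2=1$.

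\textbf{Step 3 (boundary-crossing bases).} The exclusion conditions in i) and ii) say precisely that no boundary-crossing basis of the relevant shape generates $\varphi(a)$ and $\varphi(b)$. Under these exclusions, Step 1 leaves only the chain of block bases, so by Proposition~\ref{prop:derivedVset} the factorisation into irreducibles is unique. Conversely, if such a basis exists, I will argue that it is not comparable with the block chain. Since it starts or ends inside a run, it cannot be expressed over $\{a^{p_1},b\}$, and the block basis cannot be expressed over it either. Refining it to a maximal element then yields a second maximal factor basis, hence a second factorisation.

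The main obstacle will be Step 1, namely the exhaustive claim that every nontrivial factor basis is either of block type or of one of the two excluded boundary-crossing shapes. I would first try induction on the position of the first boundary in $\varphi(a)$, tracking run lengths modulo the length of the pure-power element. The converse incomparability in Step 3 also needs care; I would use Parikh-vector and run-length counting for it.
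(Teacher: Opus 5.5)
Your Step~1 trichotomy is not merely unproven; it is false, and the sufficiency half of Step~3 fails with it. The hypothesis $k_2=1$ only excludes bases $\{a,b^e\}$ with $e>1$. It does not exclude bases such as $\{a,X\}$, $\{a,aX\}$ or $\{a^2,a^2X\}$, where one member is a fixed mixed block $X$ carrying $b$-runs of coprime lengths. Such bases are not block bases (with $d,e>1$), have neither excluded shape, are nontrivial, and can be incomparable with $\{a^{p_1},b\}$.

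Concretely, let $X=b^2a^2b^3$ and
\[
\varphi(a)=a^2Xa^4=a^2b^2a^2b^3a^4,\qquad \varphi(b)=a^4Xa^2Xa^2 .
\]
All runs have length at least $2$, $k_1=2$ and $k_2=1$. The map $\varphi$ is rank-preserving because $\varphi(a)\varphi(b)$ and $\varphi(b)\varphi(a)$ differ in the third letter. Running through the prefixes of $\varphi(a)$ as possible first factors shows that the nontrivial factor bases are exactly $\{a^2,b\}$, $\{a^2,X\}$, $\{a^2,Xa^2\}$, $\{a^2,a^2X\}$, and $\{a,v\}$ with $v\in\{X,Xa,Xa^2,aX,aXa,a^2X\}$. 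None of them pairs a word of length at least $3$ beginning and ending with $a$ with a word of length at least $2$ beginning with $a$. Each of them contains $a$, $a^2$ or $b$. So neither excluded shape occurs, and only the trivial basis $W$ has the shape in i). Thus $\varphi$ satisfies i) in your reading. Yet $\{a^2,b\}$ and $\{a,X\}$ are incomparable and both maximally derived, since only $\Sigma$ is more derived than either. Hence $\varphi$ has two inequivalent irreducible left factors, $a\mapsto a^2,\,b\mapsto b$ and $a\mapsto a,\,b\mapsto X$. The statement you set out to prove is therefore false, and no proof of Step~1 exists.

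The same example exposes a second gap. Proposition~\ref{prop:derivedVset} controls only one level, whereas uniqueness of the complete factorisation needs the whole poset of factor bases to be a chain. That requires recursing into the right factor $\psi_1$ in $\varphi=(a\mapsto a^{p_1},b\mapsto b)\circ\psi_1$, and $\psi_1$ need not satisfy the hypotheses: its $a$-runs $x_i/p_1$ can equal $1$. Here $\{a^2,Xa^2\}$ and $\{a^2,a^2X\}$ are incomparable although both lie in $\{a^2,b\}^+$, so even your ``chain of block bases'' breaks.

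Your converse in Step~3 is also unsound. A basis of shape i) need not leave $\{a^{p_1},b\}^+$: $W$ always has that shape, and so does the nontrivial basis $\{a^2,\varphi(a)\}$ whenever all $a$-runs are even and $\varphi(b)=a^2\varphi(a)$.

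Finally, because $W$ always has the shape in i), condition i) as written can never hold. Under the literal hypotheses $k_1\neq 1\neq k_2$, both sides of the equivalence are false by your own (correct) Step~2, so the literal statement is vacuous. You need a corrected statement before any proof can go through, and the natural repair you adopted is refuted by the example above.
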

    So far, we have been investigating non-unique factorisations when the factors of the rank-preserving endomorphism are not unique---that is, when an endomorphism admits distinct, maximally derived factor bases. To conclude this section, we now briefly highlight a distinct source of non-uniqueness: when the irreducible factors of a rank-preserving endomorphism are unique, but their order is not. Specifically, we investigate when two endomorphisms commute under composition. \par
Commutativity of morphisms is not a new question---Honkala \cite{Hon19} characterises the commutativity of morphisms which have upper triangular incidence matrices, a concept investigated further in Section \ref{sec:Incidence matrices}. The following proposition gives a simple condition for the commutativity of endomorphisms.
\begin{proposition}\label{prop: commutative}
    Let $\varphi_1, \varphi_2:\Sigma^+\rightarrow\Sigma^+$ be  endomorphisms for an alphabet $\Sigma=\{a_1, a_2, \dots, a_n\}$. Then $\varphi_1 $ and $ \varphi_2$ commute under composition if there exists a set $S\subseteq \Sigma$ such that:
    \begin{enumerate}[i)]
        \item  for all elements $s\in S,\varphi_1(s)\in S^+$ and $\varphi_2(s)=s$, and
        \item for all elements $r\notin S, \varphi_1(r)=r$ and $\varphi_2(r)\in (\Sigma\setminus S)^+$.
    \end{enumerate}
\end{proposition}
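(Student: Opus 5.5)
The plan is to verify directly that $\varphi_1\circ\varphi_2$ and $\varphi_2\circ\varphi_1$ agree on every generator $a_i\in\Sigma$. Since both are morphisms, agreement on $\Sigma$ gives equality on all of $\Sigma^+$. The argument splits into two cases according to whether $a_i\in S$ or $a_i\notin S$.

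First take $s\in S$. By condition i), $\varphi_2(s)=s$, so $(\varphi_1\circ\varphi_2)(s)=\varphi_1(s)$. For the other order, $\varphi_1(s)\in S^+$, so we write $\varphi_1(s)=s_1s_2\cdots s_k$ with each $s_j\in S$. Applying condition i) again to each letter gives $\varphi_2(s_j)=s_j$, hence $\varphi_2(\varphi_1(s))=\varphi_1(s)$. Both composites therefore send $s$ to $\varphi_1(s)$.

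Next take $r\notin S$. By condition ii), $\varphi_1(r)=r$, so $(\varphi_2\circ\varphi_1)(r)=\varphi_2(r)$. We also have $\varphi_2(r)\in(\Sigma\setminus S)^+$, so we write it as $r_1\cdots r_k$ with each $r_j\notin S$. Condition ii) applied letterwise gives $\varphi_1(r_j)=r_j$, and hence $\varphi_1(\varphi_2(r))=\varphi_2(r)$. Both composites send $r$ to $\varphi_2(r)$.

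No step is a genuine obstacle. The only point needing care is that the letterwise fixing works because the images stay inside $S^+$ and $(\Sigma\setminus S)^+$ respectively; this is exactly why the conditions require $\varphi_1(s)\in S^+$ and $\varphi_2(r)\in(\Sigma\setminus S)^+$. Combining the two cases, the composites agree on all of $\Sigma$, and so $\varphi_1\circ\varphi_2=\varphi_2\circ\varphi_1$.
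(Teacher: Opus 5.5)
Your proof is correct and takes essentially the same approach as the paper's (omitted) argument: a direct check that $\varphi_1\circ\varphi_2$ and $\varphi_2\circ\varphi_1$ agree on each generator, split into the cases $a_i\in S$ and $a_i\notin S$. In each case you correctly use that $\varphi_2$ fixes every word of $S^+$ letterwise and $\varphi_1$ fixes every word of $(\Sigma\setminus S)^+$ letterwise, and agreement on $\Sigma$ gives equality on all of $\Sigma^+$.
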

We highlight that, given the technical complexity of finding strong results for a specific class of morphisms, as shown in \cite{Hon19}, we expect that obtaining strong results for the commutativity of all endomorphisms will be challenging. However, we consider this a worthwhile open problem.
\section{Incidence Matrices}\label{sec:Incidence matrices}
We now use incidence matrices as a tool to investigate the reducibility of endomorphisms. Incidence matrices are widely applied to the study of morphisms, as seen in \cite{Cau03}, where they are utilised to solve the decreasing length conjecture. We show that endomorphism composition corresponds to matrix multiplication, and so the reducibility of an endomorphism implies the factorisability of its incidence matrix into a product of nontrivial matrices. However, the converse is not true: given a rank-preserving endomorphism $\varphi$ with a factorisable incidence matrix, it can be shown that $\varphi$ is not necessarily reducible. This raises the question of how the structure of an incidence matrix constrains the reducibility of its corresponding rank-preserving endomorphisms. \par
We begin by identifying the relationship between matrix multiplication and the reducibility of endomorphisms. We then identify elementary transformations on matrices that are robust with respect to (ir-)reducibility, and give sufficient conditions for matrices such that every endomorphism they represent is (ir-)reducible. Finally, we highlight how work investigating the factorisation of non-negative integer matrices in their own right can correspond to the reducibility of endomorphisms.
\begin{definition}
    Let $\varphi:\Sigma^+\rightarrow\Sigma^+$ be an endomorphism for an alphabet $\Sigma=\{a_1, a_2, \dots, a_n\}$. We define $P(\varphi)=(m_{i,j})_{1\leq i,j\leq n}$  where $m_{i,j}=|\varphi(a_i)|_{a_j}$, and we call $P(\varphi)$ the \emph{incidence matrix} of $\varphi$. Furthermore, if the endomorphism is not specified, we call any square matrix $P$ with non-negative integer values an incidence matrix.
\end{definition}
To illustrate this definition, we give the following example:
\begin{example}
    Let $\varphi:\{a,b,c\}^+\rightarrow\{a,b,c\}^+$ be the endomorphism that satisfies $\varphi(a)=abcaab, \varphi(b)=babbbba, \varphi(c)=caabccca$. Then $\varphi$ has the incidence matrix $P(\varphi)=\left(\begin{smallmatrix}
        3 & 2 & 1\\
            2 & 5 & 0\\
            3 & 1 & 4
    \end{smallmatrix}\right)$.
    \hfill $\Diamond$
\end{example}
Through our restrictions to rank-preserving endomorphisms, we can see that the incidence matrices we consider are square matrices with no zero rows or columns. The relation between incidence matrices and elementary and simplifiable morphisms has been investigated in~\cite{Bea23}. In this paper, the authors give a sufficient condition for a morphism to be elementary (in our terminology, rank-preserving).
\begin{proposition}[Proposition 4.3 \cite{Bea23}]
    Let $\Sigma_1$ and $\Sigma_2$ be alphabets and $\varphi:\Sigma_1^*\rightarrow\Sigma_2^*$ be an endomorphism. If the rank of $P(\varphi)$ is equal to $|\Sigma_1|$, then $\varphi$ is rank-preserving.
\end{proposition}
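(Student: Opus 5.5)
We prove the contrapositive: if $\varphi$ is rank-reducing, then $P(\varphi)$ has rank strictly less than $|\Sigma_1|$. Throughout we rely on the multiplicativity of incidence matrices under composition. For morphisms $\psi_1:\Sigma_1^*\rightarrow\Sigma_3^*$ and $\psi_2:\Sigma_3^*\rightarrow\Sigma_2^*$, the number of occurrences of a letter $c\in\Sigma_2$ in $\psi_2(\psi_1(a))$ is
\[
\sum_{b\in\Sigma_3}|\psi_1(a)|_b\,|\psi_2(b)|_c .
\]
This sum is exactly the $(a,c)$ entry of $P(\psi_1)P(\psi_2)$. Hence
\[
P(\psi_2\circ\psi_1)=P(\psi_1)\,P(\psi_2).
\]
The identity follows from the morphism property by summing Parikh vectors of the images of the letters of $\psi_1(a)$. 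Here $P$ of a morphism between different alphabets is the rectangular $|\Sigma_1|\times|\Sigma_3|$ matrix defined in the same way.

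Now suppose $\varphi$ is rank-reducing. Then there is a set $V=\{v_1,\dots,v_m\}$ with $m<|\Sigma_1|$ such that $\varphi(w)\in V^*$ for every $w$. In the non-endomorphism setting this is the definition of a simplifiable morphism, which the paper notes coincides with rank-reducing. Mimicking Proposition \ref{prop:left-factor} and the proof of Theorem \ref{thm:d-p reducible}, we choose a fresh alphabet $\Sigma_3=\{c_1,\dots,c_m\}$ and proceed as follows.

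\begin{enumerate}[i)]
\item Define $\psi_2(c_j)=v_j$.
\item For each $a\in\Sigma_1$, fix one factorisation $\varphi(a)=v_{j_1}\cdots v_{j_k}$, which exists since $\varphi(a)\in V^*$. Define $\psi_1(a)=c_{j_1}\cdots c_{j_k}$. Uniqueness of the factorisation is not needed, since we simply choose one.
\item Then $\varphi=\psi_2\circ\psi_1$, so $P(\varphi)=P(\psi_1)P(\psi_2)$, where $P(\psi_1)$ is $|\Sigma_1|\times m$.
\end{enumerate}

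Since the rank of a product is bounded by the rank of each factor, $\rank P(\varphi)\le m<|\Sigma_1|$. This contradicts the hypothesis, so $\varphi$ is rank-preserving.

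The only step requiring care is the matrix identity, specifically the order convention: rows are indexed by source letters, so composition reverses the order of the matrices. This order does not affect the rank bound, so the argument is essentially routine. We also note that if $\varphi$ is erasing in the monoid setting, the argument is unchanged: empty images give zero rows, which only lower the rank.
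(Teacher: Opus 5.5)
Your proof is correct. Note that the paper never proves this statement: it is imported from \cite{Bea23}. The only supporting material in the paper is Corollary~\ref{cor: multiplication composition} together with the remark that rank-reducing endomorphisms have incidence matrices of rank below $|\Sigma|$. Your argument is the standard one behind that remark: factor through a smaller alphabet, use multiplicativity of incidence matrices, and bound the rank of a product by its inner dimension.

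Where you go beyond the paper is the matrix identity itself. Corollary~\ref{cor: multiplication composition} is stated only for square incidence matrices of rank-preserving endomorphisms of a single alphabet. It therefore cannot be quoted for your factorisation through $\Sigma_3$ with $|\Sigma_3|<|\Sigma_1|$. You correctly re-derive the rectangular identity $P(\psi_2\circ\psi_1)=P(\psi_1)P(\psi_2)$ from the morphism property, and this makes the argument self-contained.

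If you wanted to stay inside the paper's endomorphism framework (the case $\Sigma_1=\Sigma_2=\Sigma$), you could instead use the endomorphism $\rho$ induced by $V$ and a surjection $f$ (Definition~\ref{def:tau2}, Proposition~\ref{prop:left-factor}). Lifting each chosen factorisation of $\varphi(a_i)$ along a section of $f$ gives $\varphi=\rho\circ\psi_1$. Since $P(\rho)$ has at most $m$ distinct rows, this yields $\rank P(\varphi)\le m<|\Sigma|$ using only square matrices. Your version has the advantage of covering the general statement with $\Sigma_1\neq\Sigma_2$ directly.
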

Thus, rank-reducing endomorphisms have incidence matrices with a rank strictly smaller than $|\Sigma|$. Note that the converse of this property does not hold; the matrix $P=\left(\begin{smallmatrix} 1 & 1 \\ 1 & 1
\end{smallmatrix}\right)$ with $\rank(P)=1$ satisfies $P(\varphi)=P$ for the rank-preserving endomorphism $\varphi:\{a,b\}^+\rightarrow\{a,b\}^+$ defined as $\varphi(a)=ab, \varphi(b)=ba$. We can also see that this property does not hold for irreducible endomorphisms: a rank-preserving endomorphism that admits an incidence matrix with rank equal to $|\Sigma|$ is not necessarily irreducible, as shown in Example \ref{ex:constant times matrix}.\par
As previously seen in \cite{Cau03}, incidence matrices can be used to calculate the Parikh vector of the image of a word under a given endomorphism. As a corollary of Proposition 4.2 \cite{Cau03}, we can see that multiplication of incidence matrices corresponds to composition of endomorphisms.
\begin{proposition}[Proposition 4.2 \cite{Cau03}]
    Let $\varphi:\Sigma^+\rightarrow\Sigma^+$ be an endomorphism for an alphabet $\Sigma=\{a_1, a_2, \dots, a_n\}$ and let $w\in\Sigma^+$ be a word. It then follows that:
\begin{align*}
       (|\varphi(w)|_{a_1}, |\varphi(w)|_{a_2}, \dots, |\varphi(w)|_{a_n})^T=(|w|_{a_1}, |w|_{a_2}, \dots, |w|_{a_n})^TP(\varphi).
\end{align*}
\end{proposition}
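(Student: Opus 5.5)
The identity follows by induction on the length of $w$, using that $\varphi$ is compatible with concatenation and that counting symbol occurrences is additive.

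For the base case, let $w=a_i$ be a single symbol. Then the Parikh vector of $w$ is the $i$-th standard basis row vector $e_i$. The product $e_i P(\varphi)$ is the $i$-th row of $P(\varphi)$, namely $(|\varphi(a_i)|_{a_1},\dots,|\varphi(a_i)|_{a_n})$, by the definition $m_{i,j}=|\varphi(a_i)|_{a_j}$. This is exactly the Parikh vector of $\varphi(a_i)$.

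For the inductive step, write $w=ua_i$ with $u\in\Sigma^+$. Since $\varphi$ is a morphism, $\varphi(w)=\varphi(u)\varphi(a_i)$. For single symbols $a_j$, occurrence counts are additive under concatenation, so $|xy|_{a_j}=|x|_{a_j}+|y|_{a_j}$. (Overlaps are only a concern for longer factors $v$, not here.) Hence the Parikh vector of $\varphi(w)$ is the sum of the Parikh vectors of $\varphi(u)$ and $\varphi(a_i)$. By the induction hypothesis and the base case, this sum is $p(u)P(\varphi)+p(a_i)P(\varphi)=(p(u)+p(a_i))P(\varphi)$, by distributivity of matrix multiplication. Finally $p(u)+p(a_i)=p(w)$, again by additivity. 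Equivalently, one can write $w=a_{i_1}\cdots a_{i_k}$ and sum the rows directly:
\[
p(\varphi(w))=\sum_{\ell=1}^{k} p(\varphi(a_{i_\ell}))=\sum_{j=1}^{n}|w|_{a_j}\,(\text{row } j \text{ of } P(\varphi)).
\]

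There is no real obstacle here. The only point needing care is the orientation convention: the statement writes the vectors with a transpose $^T$ but places them to the left of $P(\varphi)$. I will treat Parikh vectors as row vectors multiplied on the left, which is the reading consistent with rows of $P(\varphi)$ indexing the images $\varphi(a_i)$. I will note that, read literally, the formula is the transpose of $P(\varphi)^T p(w)^T$.
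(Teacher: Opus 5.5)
Your proof is correct. The paper gives no argument of its own for this statement (it imports the result from \cite{Cau03}), and your induction on $|w|$, equivalently summing the rows of $P(\varphi)$ indexed by the letters of $w$, is the standard proof.

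Your orientation concern is harmless. If the tuples are read as column vectors, the ${}^T$ turns them into rows, so the statement literally is the row-vector identity $p(\varphi(w)) = p(w)P(\varphi)$ that you prove. This is also the convention under which Corollary~\ref{cor: multiplication composition} reads $P(\psi_2\circ\psi_1)=P(\psi_1)P(\psi_2)$.
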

\begin{corollary}\label{cor: multiplication composition}
    Let $\psi_1, \psi_2:\Sigma^+\rightarrow\Sigma^+$ be rank-preserving endomorphisms for $\Sigma=\{a_1, a_2, \dots, a_n\}$ and define $\varphi=\psi_2\circ \psi_1$. Then $P(\varphi)=P(\psi_1)P(\psi_2)$.
\end{corollary}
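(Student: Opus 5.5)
The plan is to read off the matrix identity from Proposition 4.2 of \cite{Cau03}, applied once for each letter $a_i\in\Sigma$. First I will record that the $i$-th row of $P(\varphi)$ is the Parikh vector $p(\varphi(a_i))$. This holds because, by definition, $m_{i,j}=|\varphi(a_i)|_{a_j}$.

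Next I fix $a_i$ and expand $\varphi(a_i)=\psi_2(\psi_1(a_i))$. I apply Proposition 4.2 to the endomorphism $\psi_2$ with the word $w=\psi_1(a_i)\in\Sigma^+$. This gives $p(\varphi(a_i))=p(\psi_1(a_i))\,P(\psi_2)$, where Parikh vectors are treated as row vectors. By the first step, $p(\psi_1(a_i))$ is exactly the $i$-th row of $P(\psi_1)$. Hence the $i$-th row of $P(\varphi)$ equals the $i$-th row of $P(\psi_1)$ multiplied by $P(\psi_2)$, which is the $i$-th row of $P(\psi_1)P(\psi_2)$.

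Since this holds for every $i$ with $1\le i\le n$, the two matrices agree row by row, and so $P(\varphi)=P(\psi_1)P(\psi_2)$. The rank-preserving hypothesis plays no role in the computation; it only keeps the statement inside the paper's setting. The order of the factors is reversed relative to the composition because the matrix convention is row-indexed by the argument letter.

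The one delicate point is this orientation. Proposition 4.2 is written with transposes, but under the row convention for $P$ it reads as $p(\varphi(w))=p(w)P(\varphi)$. I will state that convention explicitly, or alternatively check the identity directly by linearity: $|\psi_2(u)|_{a_j}=\sum_k |u|_{a_k}\,|\psi_2(a_k)|_{a_j}$, which holds because $\psi_2$ is a morphism and letter counts are additive over concatenation.
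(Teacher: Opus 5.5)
Your proposal is correct and takes the same route as the paper, which states this corollary as a direct consequence of Proposition 4.2 of \cite{Cau03}. You make that derivation explicit by applying the proposition to $w=\psi_1(a_i)$ row by row, and you fix the row-vector convention that the paper's transposed notation leaves ambiguous, so the product correctly comes out in the order $P(\psi_1)P(\psi_2)$; you are also right that the rank-preserving hypothesis is never used.
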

This proposition shows that if a rank-preserving endomorphism $\varphi_1$ is reducible, then $P(\varphi_1)$ can be expressed as a product of incidence matrices. The following proposition and example show that the converse is not true---given a rank-preserving endomorphism $\varphi_2$ and its incidence matrix $P(\varphi_2)$, being able to factorise $P(\varphi_2)$ into a product of nontrivial incidence matrices does not necessarily mean that $\varphi_2$ is reducible, merely that there exists a reducible endomorphism $\varphi_3$ such that $P(\varphi_2)=P(\varphi_3)$. This allows us to consider sets of endomorphisms represented by a single incidence matrix, and we ask which matrices force (ir-)reducibility of their respective rank-preserving endomorphisms.\par 
We say an incidence matrix $P$ \emph{represents exclusively (ir-)reducible rank-preserving endomorphisms} if every rank-preserving endomorphism $\varphi:\Sigma^+\rightarrow\Sigma^+$ satisfying $P(\varphi)=P$ is (ir-)reducible.
\begin{proposition}\label{prop:constant times matrix}
    Let $P\in\mathbb{N}_0^{n\times n}$ be an incidence matrix. For a $k\in\mathbb{N}$ with $k>1$, let $P'=kP$. Then $P$ does not represent exclusively irreducible rank-preserving endomorphisms.
\end{proposition}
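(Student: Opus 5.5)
I read the statement as intended for $P'$: we must exhibit a reducible rank-preserving endomorphism $\varphi$ with $P(\varphi)=P'=kP$. I assume $P$ is the incidence matrix of at least one rank-preserving endomorphism $\psi$. Otherwise $P$ has a zero row or column, and the claim is vacuous or degenerate, which I would note separately. The plan is to realise $kP$ as the matrix of a composition with the ``scaling'' endomorphism $\mu:\Sigma^+\rightarrow\Sigma^+$, $\mu(a_i)=a_i^k$. Since $P(\mu)=kI$ and $kI$ has rank $n$, Proposition 4.3 of \cite{Bea23} shows that $\mu$ is rank-preserving. Also $\mu$ is not an automorphism, because $k>1$.

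\textbf{Main case: $\psi$ is not an automorphism.} Set $\varphi=\mu\circ\psi$, so that $\varphi(a_i)$ is obtained from $\psi(a_i)$ by replacing each letter by its $k$-th power. Corollary~\ref{cor: multiplication composition} gives
\[
P(\varphi)=P(\psi)P(\mu)=P\cdot kI=kP.
\]
Both factors are then non-automorphisms. Once $\varphi$ is known to be rank-preserving, it is reducible by definition, and equivalently Theorem~\ref{thm:d-p reducible} applies with the factor basis $\{a_1^k,\dots,a_n^k\}$. This set is nontrivial: it differs from $\Sigma$ since $k>1$, and it differs from $W$ since $\psi$ is not a renaming.

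\textbf{Degenerate case: $\psi$ is an automorphism.} Then $P$ is a permutation matrix. The only endomorphism with matrix $kP$ is $\varphi(a_i)=\sigma(a_i)^k$, where $\sigma$ is the permutation given by $P$. Its matrix has full rank, so $\varphi$ is rank-preserving by Proposition 4.3 of \cite{Bea23}. For $n\geq 2$, the set $V=\{a_1^k,a_2,\dots,a_n\}$ is a factor basis of $\varphi$ of size $n$. It differs from $\Sigma$ because $a_1^k\neq a_1$, and from $W$ because $a_2\notin W$. Hence $V$ is nontrivial, and Theorem~\ref{thm:d-p reducible} gives reducibility. For $n=1$ and $k$ prime, $kP=(k)$ represents only $a\mapsto a^k$, which is irreducible by the Corollary following Proposition~\ref{prop:unaryirreducible}. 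This exceptional case has to be excluded, or else stated explicitly.

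\textbf{Expected obstacle.} The main difficulty is proving that $\varphi=\mu\circ\psi$ is rank-preserving when $P$ does not have full rank, since then \cite{Bea23} does not apply. If $\operatorname{rank}P=n$, this is immediate, because $\operatorname{rank}(kP)=n$. In general I would argue by contradiction. Suppose $\mu(\psi(a_i))\in U^+$ for some $U$ with $|U|<n$, and choose $U$ minimal. Every word $\mu(\psi(a_i))$ is a concatenation of blocks $a_j^k$, and I would try to show that the elements of a minimal such $U$ can be cut at block boundaries. That would give $U=\mu(U')$ with $U'\subset\Sigma^+$ and $|U'|\leq|U|$. Since $\mu$ is injective, it would follow that $\psi(a_i)\in U'^+$ for all $i$, contradicting the rank-preservation of $\psi$.

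The delicate point is the block-boundary alignment for elements of $U$ that begin or end in the middle of a block $a_j^k$. I would try to handle this with a defect-theorem style argument (\cite{Har04}) applied to the code generated by $U$. If that fails, I would fall back on choosing $\psi$ cleverly among the endomorphisms with matrix $P$. For example, one could take a $\psi$ whose images begin with distinct letters, so that prefix-freeness makes the required alignment easy to verify.
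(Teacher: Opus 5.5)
Your construction $\varphi=\mu\circ\psi$ with $\mu(a_i)=a_i^k$ is the one the paper uses: Example~\ref{ex:constant times matrix} is exactly $\mu\circ\psi$ for $\psi(a)=aba^2ba$, $\psi(b)=a^3bab^2a$, $k=2$. Your reading of the typo ($P'$ for $P$) is right, and your caveats are genuine. The statement fails for $n=1$, $P=(1)$, $k$ prime, and it needs $P$ to be the matrix of at least one rank-preserving endomorphism.

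The gap is the step you flag yourself: rank-preservation of $\mu\circ\psi$. Without it, $\varphi$ is not even eligible to be reducible. For $n=2$ this is immediate, since binary rank-preserving means injective and injectivity is closed under composition. For $P$ of full rank it follows from \cite{Bea23}. For $n\ge 3$ and singular $P$, however, you only describe attempts, and none of them works as stated:
\begin{itemize}
\item The lemma you aim for (a minimal $U$ is block-aligned, so $U=\mu(U')$) is false in general. For $k=2$, the set $\{a^2,\,a^2b^2c^2a^2\}=\mu(\{a,abca\})$ has the minimum-size generating set $\{a,\,ab^2c^2a\}$, and neither element lies in $\mu(\Sigma^+)$.
\item The defect theorem gives nothing here, because $U$ may well be a code.
\item The prefix-code fallback is not worked out.
\end{itemize}

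The missing idea is to decimate rather than align. In $\mu(w)$, a change of letter can only occur between positions $pk$ and $pk+1$. Suppose $\mu(\psi(a_i))=u_1\cdots u_r$ with $u_j\in U$. Let $\delta(u_j)$ be the subword of $u_j$ formed by its letters at global positions divisible by $k$; then $\psi(a_i)=\delta(u_1)\cdots\delta(u_r)$.
\begin{itemize}
\item If $u\in U$ contains two distinct letters, the position of its first letter change fixes its starting position modulo $k$. So $\delta(u)$ is one and the same non-empty word at every occurrence.
\item If $u=c^m$, every occurrence gives $\delta(u)\in c^*$.
\end{itemize}
Hence all $\psi(a_i)$ lie in $V^+$ for $V=\{\delta(u) : u\in U \text{ non-unary}\}\cup\{c : c^m\in U\}$. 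Since $|V|\le |U|<n$, this contradicts the rank-preservation of $\psi$. With this lemma inserted, your main case goes through for every $n$, and no special choice of $\psi$ is needed.
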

The following example illustrates Proposition \ref{prop:constant times matrix}.
\begin{example}\label{ex:constant times matrix}
    Let $M\in\mathbb{N}^{2\times 2}$ be the incidence matrix that satisfies $M=\big(\begin{smallmatrix}  8 & 4 \\ 10 & 6\end{smallmatrix}\big)$. Note that $M=2\left(\begin{smallmatrix}
        4 & 2 \\ 5 & 3
    \end{smallmatrix}\right)$, and let $\varphi_1, \varphi_2:\Sigma^+\rightarrow\Sigma^+$ be the rank-preserving endomorphisms satisfying $\varphi_1(a)=a^2b^2a^4b^2a^2,  \varphi_1(b)=a^6b^2a^2b^4a^2, \varphi_2(a)=aba^3b^2a^2ba^2$ and $\varphi_2(b)=a^5b^3a^2ba^3b^2$. Note that $P(\varphi_1)=P(\varphi_2)=M$. It can be shown that $\varphi_1$ admits the factor basis $V=\{bb, aa\}$, and $\varphi_2$ does not admit a nontrivial factor basis. Thus, the endomorphism $\varphi_1$ is reducible and $\varphi_2$ is irreducible. \hfill $\Diamond$
\end{example}
 We next show how the property of representing exclusively (ir-)reducible rank-preserving endomorphisms can be preserved under elementary transformations of matrices.
\begin{proposition}\label{prop:matrix transformations}
    Let $P\in\mathbb{N}_0^{n\times n}$ be an incidence matrix that represents exclusively (ir-)reducible rank-preserving endomorphisms. Then the following operations are robust with respect to (ir-)reducibility:
    \begin{enumerate}[i)]
        \item swapping the $i^{\text{th}}$ and $j^{\text{th}}$ rows of the matrix and
        \item swapping the $i^{\text{th}}$ and $j^{\text{th}}$ columns of the matrix.
    \end{enumerate}
\end{proposition}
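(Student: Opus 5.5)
The plan is to realise each row swap and each column swap of $P(\varphi)$ by composing $\varphi$ with a transposition automorphism, and then to transfer (ir-)reducibility using Proposition \ref{prop:equivalencereducible}. Let $\pi:\Sigma^+\rightarrow\Sigma^+$ be the automorphism exchanging $a_i$ and $a_j$ and fixing every other symbol. Its incidence matrix $E=P(\pi)$ is the corresponding permutation matrix, and $E^2=I$.

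\textbf{Step 1: matrix identities.} The $k$-th row of $P(\varphi\circ\pi)$ is the Parikh vector of $\varphi(\pi(a_k))$. So $P(\varphi\circ\pi)$ is $P(\varphi)$ with rows $i$ and $j$ exchanged; Corollary \ref{cor: multiplication composition} gives the same thing as $EP(\varphi)$. Likewise, $P(\pi\circ\varphi)=P(\varphi)E$ is $P(\varphi)$ with columns $i$ and $j$ exchanged, since $\pi$ renames the letters counted in each column. I would check these two facts directly from the definition of $P$. The corollary is stated only for rank-preserving factors, but automorphisms are rank-preserving, so it applies either way.

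\textbf{Step 2: rank preservation under composition with an automorphism.} Suppose $\varphi$ is rank-preserving. Any factor basis of $\varphi\circ\pi$ is a factor basis of $\varphi$, because the two maps have the same set of images. If $V$ is a factor basis of $\pi\circ\varphi$, then $\pi(V)$ is a factor basis of $\varphi$ of the same size, since $\pi^{-1}=\pi$. Hence in both cases every factor basis has size $|\Sigma|$. The converse holds by the same argument, because composing again with $\pi$ recovers $\varphi$.

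\textbf{Step 3: main argument for row swaps.} Let $Q$ be $P$ with rows $i$ and $j$ swapped, and let $\varphi$ be any rank-preserving endomorphism with $P(\varphi)=Q$. Set $\widetilde\varphi=\varphi\circ\pi$. Then $P(\widetilde\varphi)=EQ=P$ by Step 1, and $\widetilde\varphi$ is rank-preserving by Step 2. By hypothesis $\widetilde\varphi$ is (ir-)reducible. Since $\varphi=\mathrm{id}\circ\widetilde\varphi\circ\pi$, we have $\varphi\cong\widetilde\varphi$, so Proposition \ref{prop:equivalencereducible} shows that $\varphi$ is (ir-)reducible as well. Thus $Q$ represents exclusively (ir-)reducible rank-preserving endomorphisms.

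\textbf{Step 4: column swaps.} The argument is identical with $\widetilde\varphi=\pi\circ\varphi$ and $\varphi=\pi\circ\widetilde\varphi\circ\mathrm{id}$.

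The only real obstacle is the bookkeeping in Step 1. With the row convention $m_{i,j}=|\varphi(a_i)|_{a_j}$ and the order reversal in Corollary \ref{cor: multiplication composition}, it is easy to pair the row swap with the wrong side of the composition. I would settle this by verifying it on one explicit $2\times 2$ example. Everything else is a direct application of earlier results.
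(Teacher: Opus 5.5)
Your proof is correct and takes what is almost certainly the paper's route (its proof is not reproduced in this version): a row swap is realised by precomposing with the transposition automorphism $\pi$, and a column swap by postcomposing with it, so that $\varphi\cong\widetilde\varphi$ and Proposition~\ref{prop:equivalencereducible} transfers (ir-)reducibility. Your Step~2 check that rank preservation survives composition with $\pi$ and your matching of row swaps to $EP$ and column swaps to $PE$ are the bookkeeping this needs, and both are right.
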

We now begin our investigation into matrices that force (ir-)reducibility of their corresponding rank-preserving endomorphisms. We consider incidence matrices that represent only one endomorphism, which are exactly the monomial matrices. A monomial matrix is a square matrix such that every row and column has exactly one non-zero entry.
\begin{proposition}\label{prop: monomial matrix}
    Let the monomial matrix $P\in\mathbb{N}_0^{n\times n}$ be an incidence matrix. If two of its elements are greater than 1, then $P$ represents exclusively reducible rank-preserving endomorphisms.
\end{proposition}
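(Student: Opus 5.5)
A monomial incidence matrix $P$ with entries $m_{i,\sigma(i)}=k_i\geq 1$, where $\sigma$ is a permutation of $\{1,\dots,n\}$, is the incidence matrix of exactly one endomorphism, namely $\varphi(a_i)=a_{\sigma(i)}^{k_i}$. Indeed, each image $\varphi(a_i)$ contains only the symbol $a_{\sigma(i)}$ and does so exactly $k_i$ times. So the plan is to show that this single $\varphi$ is rank-preserving and reducible whenever at least two of the $k_i$ exceed $1$.

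For rank preservation, the images use pairwise distinct symbols and together cover all of $\Sigma$. Any factor basis $V$ of $\varphi$ must therefore contain, for each $j$, an element containing $a_j$. Each element of $V$ is a factor of some image, which is a power of a single letter, so each element is unary. Hence $|V|\geq n$ and $\varphi$ is rank-preserving. (Alternatively, $\rank P=n$ and Proposition 4.3 of \cite{Bea23} applies.)

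For reducibility I would invoke Theorem \ref{thm:d-p reducible} and exhibit a nontrivial factor basis. Let $k_i,k_j>1$ with $i\neq j$. Define $V$ to consist of $a_{\sigma(i)}^{k_i}$ together with the single letters $a_{\sigma(\ell)}$ for all $\ell\neq i$. Then $|V|=n$ and $W\subseteq V^+$.
\begin{itemize}
\item $V\neq\Sigma$, because $a_{\sigma(i)}^{k_i}$ has length at least $2$.
\item $V\neq W$, because $a_{\sigma(j)}\in V$ while the only image containing $a_{\sigma(j)}$ is $a_{\sigma(j)}^{k_j}\neq a_{\sigma(j)}$.
\end{itemize}
Thus $V$ is nontrivial and $\varphi$ is reducible.

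Equivalently, one can write $\varphi=\psi_2\circ\psi_1$ explicitly: $\psi_2$ raises only $a_{\sigma(i)}$ to the power $k_i$, and $\psi_1$ permutes the letters and applies the remaining powers. Neither factor is an automorphism, since each keeps one of the exponents $k_i,k_j>1$.

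The only real subtlety is the uniqueness claim, that $P$ represents no other endomorphism. This is immediate because a zero entry forbids the corresponding letter, so there is little obstacle beyond bookkeeping of $\sigma$.
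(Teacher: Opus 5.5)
Your proof is correct and is essentially the intended argument. $P$ represents only $\varphi(a_\ell)=a_{\sigma(\ell)}^{k_\ell}$, which is rank-preserving, and splitting off one exponent $k_i>1$ yields the nontrivial factor basis $(\Sigma\setminus\{a_{\sigma(i)}\})\cup\{a_{\sigma(i)}^{k_i}\}$ (equivalently, your explicit factorisation $\psi_2\circ\psi_1$), while the second exponent $k_j>1$ guarantees $V\neq W$, so Theorem~\ref{thm:d-p reducible} applies.

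One wording fix in the rank-preservation step: an arbitrary factor basis may a priori contain words that are not factors of any image. So argue only about the elements actually used to build each image $a_{\sigma(\ell)}^{k_\ell}$. These are powers of pairwise distinct letters and already force $|V|\geq n$.
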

As shown in \cite{Hon24}, upper triangular incidence matrices have a more manageable structure in the consideration of endomorphisms. Therefore, upper triangular matrices form a more structured subclass where we can give a sufficient condition for reducibility of rank-preserving endomorphisms using only matrix-level reasoning.
We call a matrix $P\in\mathbb{N}_0^{n\times n}$ \emph{positive upper triangular} if it is upper triangular and $P_{ij}\geq 1$ for all $i\leq j$.
\begin{proposition}\label{prop:upper triangular reducible}
    Let $P\in\mathbb{N}^{n\times n}_0$ be an incidence matrix for $n\geq 3$. If $P$ is positive upper triangular then $P$ represents exclusively reducible rank-preserving endomorphisms.
\end{proposition}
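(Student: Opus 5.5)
The plan is to exhibit, for an arbitrary rank-preserving $\varphi$ with $P(\varphi)=P$, an explicit nontrivial factor basis. Theorem~\ref{thm:d-p reducible} then gives that $\varphi$ is reducible. Only the zero pattern and positivity of $P$ are used, so the argument applies uniformly to every such $\varphi$.

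Let $\Sigma=\{a_1,\dots,a_n\}$ and $m_{i,j}=|\varphi(a_i)|_{a_j}$. Reading off the hypotheses:

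\begin{itemize}
\item Upper triangularity gives $m_{i,j}=0$ for $j<i$. Hence $\varphi(a_i)\in\{a_i,a_{i+1},\dots,a_n\}^+$, and in particular $\varphi(a_i)\in\{a_2,\dots,a_n\}^+$ for every $i\geq 2$.
\item Positivity gives $m_{i,j}\geq 1$ for $i\leq j$. So $\varphi(a_1)$ contains every letter of $\Sigma$, and $\varphi(a_2)$ contains each of $a_2,\dots,a_n$.
\end{itemize}

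I take
\[
V=\{\varphi(a_1),a_2,a_3,\dots,a_n\}.
\]
This is a factor basis of $\varphi$. Indeed, $\varphi(a_1)\in V$, and each $\varphi(a_i)$ with $i\geq 2$ is a word over $\{a_2,\dots,a_n\}\subseteq V$. Its elements are pairwise distinct, because $\varphi(a_1)$ contains $a_1$ and so is not any of the letters $a_2,\dots,a_n$. Therefore $|V|=n=|\Sigma|$, as Definition~\ref{def:factor basis} requires and as rank-preservation forces anyway.

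It remains to check nontriviality.

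\begin{itemize}
\item $V\neq\Sigma$: the word $\varphi(a_1)$ contains all $n\geq 3$ letters, so $|\varphi(a_1)|\geq n>1$ and $\varphi(a_1)\notin\Sigma$.
\item $V\neq W$, where $W=\{\varphi(a_i)\mid 1\leq i\leq n\}$: this is the step where $n\geq 3$ is essential. The word $\varphi(a_2)$ contains both $a_2$ and $a_3$, so it has length at least $2$ and is not one of the letters $a_2,\dots,a_n$. It is also not $\varphi(a_1)$, since it does not contain $a_1$. Thus $\varphi(a_2)\in W\setminus V$.
\end{itemize}

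Hence $V$ is a nontrivial factor basis, and Theorem~\ref{thm:d-p reducible} (valid since $|\Sigma|>1$) shows $\varphi$ is reducible. Since $\varphi$ was arbitrary, $P$ represents exclusively reducible rank-preserving endomorphisms.

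The only delicate point is the $V\neq W$ check. For $n=2$ the same construction fails: there $\varphi(a_2)\in a_2^+$ and may equal the single letter $a_2$, which would make $V=W$. So the whole obstacle reduces to guaranteeing that $\varphi(a_2)$ has length at least $2$, which positivity provides once $n\geq 3$.
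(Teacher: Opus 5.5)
Your proof is correct. Upper triangularity makes $V=\{\varphi(a_1),a_2,\dots,a_n\}$ a factor basis, and positivity with $n\geq 3$ ensures that neither $\varphi(a_1)$ nor $\varphi(a_2)$ is a single letter, so $V\neq\Sigma$, $V\neq W$, and Theorem~\ref{thm:d-p reducible} applies. The paper defers its own proof to the full version, so I cannot compare directly, but this is very likely its intended route: condition i), $\gamma>1$, of the binary companion Proposition~\ref{prop:upper triangular reducible 2} is exactly your construction in the case where $\varphi(a_2)=a_2^{\gamma}$ is not a single letter.
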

The following proposition illustrates the additional considerations needed for the binary case.
\begin{proposition}\label{prop:upper triangular reducible 2}
        Let $P\in\mathbb{N}^{n\times n}_0$ be an incidence matrix for $n=2$. If $P=\left(\begin{smallmatrix}
            \alpha & \beta \\ 0 & \gamma
        \end{smallmatrix}\right)$ is positive upper triangular and satisfies one of:
        \begin{enumerate}[i)]
            \item $\gamma>1$, 
            \item $\beta=1, \alpha\geq 2$, or
            \item $\alpha=\gamma=1$ and $\beta> 1$,
        \end{enumerate} then $P$ represents exclusively reducible rank-preserving endomorphisms.
\end{proposition}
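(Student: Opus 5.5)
The plan is to read off the shape of $\varphi$ from $P$ and then apply Theorem~\ref{thm:d-p reducible}: exhibit a nontrivial factor basis $V$, i.e.\ a two-element set with $\varphi(a),\varphi(b)\in V^+$, $V\neq\Sigma$ and $V\neq W=\{\varphi(a),\varphi(b)\}$. Since $m_{i,j}=|\varphi(a_i)|_{a_j}$, the zero entry $m_{2,1}=0$ gives $\varphi(b)=b^\gamma$. The first row gives that $\varphi(a)$ contains exactly $\alpha\ge 1$ occurrences of $a$ and $\beta\ge 1$ occurrences of $b$. Because $\varphi$ is assumed rank-preserving, every factor basis has two elements. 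So in each case it suffices to produce one suitable $V$ and check the two non-equalities.

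\emph{Case i), $\gamma>1$.} Take $V=\{\varphi(a),b\}$. Then $\varphi(a)\in V^+$ and $\varphi(b)=b^\gamma\in V^+$. We have $V\neq W$ because $b\neq b^\gamma$, and $V\neq\Sigma$ because $\varphi(a)\neq a$ (it contains a $b$). The corresponding factorisation is $\psi_1(a)=a$, $\psi_1(b)=b^\gamma$, $\psi_2(a)=\varphi(a)$, $\psi_2(b)=b$.

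\emph{Case iii), $\alpha=\gamma=1$, $\beta>1$.} Here $\varphi(b)=b$ and $\varphi(a)=b^iab^j$ with $i+j=\beta\ge 2$.
\begin{itemize}
\item If $i\ge 1$, take $V=\{b,ba\}$. Then $\varphi(a)=b^{i-1}(ba)b^j$. The set $V$ is not $\Sigma$ because $ba\neq a$, and it is not $W$ because $ba\neq b^iab^j$, which has length at least $3$.
\item If $i=0$, then $j=\beta\ge 2$; take $V=\{b,ab\}$ symmetrically.
\end{itemize}
Either way Theorem~\ref{thm:d-p reducible} gives reducibility.

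\emph{Case ii), $\beta=1$, $\alpha\ge 2$.} If $\gamma>1$ we are already in case i). So the genuine work is $\gamma=1$, where $\varphi(b)=b$ and $\varphi(a)=a^iba^j$ with $i+j=\alpha$. Any factor basis must contain $b$, since $b\in V^+$ forces this. Hence the other element $v$ must satisfy $a^iba^j\in\{b,v\}^+$. When $d=\gcd(i,j)>1$ (with $\gcd(0,j)=j$), the choice $v=a^d$ works and is nontrivial. This is the step I expect to be the main obstacle. For $\varphi(a)=aba^2$, the only candidates for $v$ are $a$ and $aba^2$, and both give trivial factor bases. So I would first check whether the intended hypothesis in ii) implicitly excludes $\gcd(i,j)=1$ when $\gamma=1$. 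Otherwise I would try to recover the case through Proposition~\ref{prop:matrix transformations} and the reversal result, Proposition~\ref{prop:reversalreducible}. If neither works, I would record the counterexample and restrict ii) to $\gamma>1$ or $\gcd(i,j)>1$.
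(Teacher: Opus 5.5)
Your proofs of (i) and (iii) are correct. The zero entry $m_{2,1}=0$ forces $\varphi(b)=b^\gamma$. Then $\{\varphi(a),b\}$ (when $\gamma>1$) and $\{b,ba\}$ or $\{b,ab\}$ (when $\alpha=\gamma=1$, $\beta\ge 2$) are nontrivial factor bases, so Theorem~\ref{thm:d-p reducible} applies.

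Your analysis of (ii) is also correct, and you should state its conclusion outright rather than hedge: clause (ii) of the statement is false. When $\gamma=1$, every factor basis contains $b$, and its other element is either $\varphi(a)$ itself or $a^d$ with $d\mid\gcd(i,j)$. Take $i=1$, $j=\alpha-1$. Then for \emph{every} $\alpha\ge 2$, the rank-preserving endomorphism $\varphi(a)=aba^{\alpha-1}$, $\varphi(b)=b$ has $P(\varphi)=\left(\begin{smallmatrix}\alpha&1\\0&1\end{smallmatrix}\right)$. It admits only the trivial factor bases $\Sigma$ and $W$, so it is irreducible.

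The matrix itself does factor, e.g.\ $\left(\begin{smallmatrix}2&1\\0&1\end{smallmatrix}\right)=\left(\begin{smallmatrix}1&1\\0&1\end{smallmatrix}\right)\left(\begin{smallmatrix}2&0\\0&1\end{smallmatrix}\right)$, and this factorisation is realised by the reducible $\varphi(a)=a^2b$. It simply does not represent \emph{exclusively} reducible endomorphisms. So (ii) is false whenever $\gamma=1$ and is subsumed by (i) whenever $\gamma>1$. No proof of (ii) can exist under the stated convention $m_{i,j}=|\varphi(a_i)|_{a_j}$, whatever the full version argues.

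Discard both of your fallback ideas. Propositions~\ref{prop:matrix transformations} and~\ref{prop:reversalreducible} only transport (ir)reducibility. They cannot turn your irreducible counterexample into a reducible one; its reversal $a^{\alpha-1}ba$, $b$ is irreducible as well. Likewise, restricting (ii) to $\gcd(i,j)>1$ is not a condition on $P$, since $P$ does not determine $i$ and $j$. As a statement about matrices, (ii) must simply be deleted.

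A plausible source of the error is the convention. Under the transposed convention $m_{i,j}=|\varphi(a_j)|_{a_i}$ one gets $\varphi(a)=a^\alpha$. Then $\{a,\varphi(b)\}$ is a nontrivial factor basis as soon as $\alpha\ge 2$, by the mirror image of your argument for (i), so (ii) would hold. But under that convention (i) fails, e.g.\ for $\varphi(a)=a$, $\varphi(b)=bab$. No single convention makes all three clauses true.
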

We do not provide a full characterisation of the reducibility of upper triangular, rank-preserving endomorphisms over a binary alphabet as it requires consideration of the structure of the endomorphism which is lost in the construction of the incidence matrices. \par
We now consider incidence matrices that are neither monomial nor upper triangular.
\begin{proposition}\label{prop:exclusively reducible morphisms}
    Let the incidence matrix $P\in\mathbb{N}_0^{2\times 2}$ satisfy one of:
    \begin{enumerate}[i)]
        \item $P=\big(\begin{smallmatrix}
        2 & 2 \\ 2 & 2
    \end{smallmatrix}\big)$, or
    \item $P=\big(\begin{smallmatrix}
         1 & \alpha \\ 1 & \beta
     \end{smallmatrix} \big)$, where $\alpha, \beta >1$ are not coprime.
    \end{enumerate}
    Then $P$ represents exclusively reducible rank-preserving endomorphisms.
\end{proposition}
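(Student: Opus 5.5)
By Theorem~\ref{thm:d-p reducible}, it suffices to exhibit, for every rank-preserving $\varphi:\{a,b\}^+\rightarrow\{a,b\}^+$ with $P(\varphi)=P$, a nontrivial factor basis $V$, i.\,e.\ a two-element set $V$ with $\varphi(a),\varphi(b)\in V^+$, $V\neq\{a,b\}$ and $V\neq W=\{\varphi(a),\varphi(b)\}$. Since $\varphi$ is rank-preserving, any factor basis automatically has size $2$. So the plan is purely constructive: write down the images allowed by $P$ and give an explicit $V$ in each case.

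\emph{Case ii).} Each image contains exactly one $a$. Write $\varphi(a)=b^{i}ab^{\alpha-i}$ and $\varphi(b)=b^{j}ab^{\beta-j}$, and let $d=\gcd(\alpha,\beta)\geq 2$. Set $k=\min(i,j)$ and $\ell=\min(\alpha-i,\beta-j)$.
\begin{itemize}
\item If $k\geq 1$, take $V=\{b,\,b^{k}a\}$. Then $\varphi(a)=b^{i-k}(b^ka)b^{\alpha-i}\in V^+$, and likewise $\varphi(b)\in V^+$.
\item If $\ell\geq 1$, take $V=\{b,\,ab^{\ell}\}$ symmetrically.
\end{itemize}
In both subcases $V\neq\Sigma$ because its second word has length at least $2$. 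Also $V\neq W$, because $b\notin W$: every image contains an $a$.

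The remaining situation is $k=\ell=0$. If $i=j=0$ then $\ell=\min(\alpha,\beta)\geq 2$, and symmetrically $i=\alpha$, $j=\beta$ forces $k\geq 2$. Hence the only configurations left are $\{\varphi(a),\varphi(b)\}=\{ab^{\alpha},\,b^{\beta}a\}$ or $\{b^{\alpha}a,\,ab^{\beta}\}$. Here every maximal run of $b$'s has length $\alpha$ or $\beta$, both multiples of $d$, so $V=\{a,b^{d}\}$ is a factor basis. It is nontrivial since $d\geq2$ and $a\notin W$.

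\emph{Case i).} Both images are among the six words of length $4$ with two $a$'s and two $b$'s: $aabb, abab, abba, baab, baba, bbaa$. Here $\varphi(a)\neq\varphi(b)$ is forced by rank preservation, so there are at most $15$ unordered pairs. I would settle these by a short table, using a small stock of candidate bases.
\begin{itemize}
\item First, split each image into two halves of length $2$. If at most two distinct halves occur overall, those halves form a factor basis. It is nontrivial because its words have length $2$ while $W$ consists of words of length $4$. This covers, for instance, $\{abab,baba\}$ via $\{ab,ba\}$.
\item Otherwise, use one of the following. The basis $\{a,bb\}$ works whenever all $b$-runs have length $2$, as for $\{aabb,abba\}$; dually, $\{b,aa\}$ works when all $a$-runs have length $2$. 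The basis $\{a,ab\}$ works when both words begin with $a$ and avoid $bb$, and its mirror and complementary variants $\{a,ba\}$, $\{b,ab\}$, $\{b,ba\}$ work analogously.
\end{itemize}
None of these candidates equals $\Sigma$ or $W$. Proposition~\ref{prop:equivalencereducible} and Proposition~\ref{prop:reversalreducible} (renaming $a\leftrightarrow b$, reversal) should cut the table down to a handful of representative pairs.

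The main obstacle is making the case analysis in i) genuinely exhaustive. I expect to find that every pair is caught by one of the listed bases, but this must be checked pair by pair. In particular, pairs such as $\{abab,bbaa\}$ or $\{abba,baab\}$ need a basis mixing a letter and a length-$2$ or length-$3$ block, for example $\{b,aba\}$ or $\{a,bb\}$. A missing case would be found, and then patched, during this enumeration. Case ii) is essentially settled by the $k,\ell$ argument above.
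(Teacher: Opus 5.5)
Your argument for case ii) is complete and correct. The bases $\{b,b^ka\}$ and $\{b,ab^{\ell}\}$ are nontrivial because $b\notin W$. The residual configurations with $k=\ell=0$ are exactly $\{ab^{\alpha},b^{\beta}a\}$ and $\{b^{\alpha}a,ab^{\beta}\}$, and only there is $\gcd(\alpha,\beta)\geq 2$ needed, via $\{a,b^{d}\}$.

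Case i), however, is left as a plan, and the stock of bases you commit to provably does not cover all $15$ pairs. Take $W=\{aabb,abab\}$:
\begin{itemize}
\item the halves $aa,bb,ab$ are three distinct words;
\item $\{a,bb\}$ and $\{b,aa\}$ fail on $abab$, whose runs all have length one;
\item $\{a,ab\}$, $\{a,ba\}$, $\{b,ab\}$ and $\{b,ba\}$ all fail on $aabb$.
\end{itemize}
Letting the basis word that begins $aabb$ range over the prefixes $a,aa,aab,aabb$ shows that the \emph{only} nontrivial factor basis here is $\{ab,aabb\}$.

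The same happens for $\{aabb,baba\}$, $\{abab,bbaa\}$ and $\{baba,bbaa\}$. In particular, the basis $\{b,aba\}$ you suggest for $\{abab,bbaa\}$ does not work: parsing $bbaa$ would require $aa\in\{b,aba\}^+$. The unique nontrivial basis there is $\{ab,bbaa\}$. Conversely, $\{abba,baab\}$ needs no mixed basis at all, since its halves give $\{ab,ba\}$. So your expectation that every pair is caught by the listed bases is false.

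The missing idea is that $abab=(ab)^2$ and $baba=(ba)^2$ are squares. Whenever $W=\{x^2,u\}$, the set $\{x,u\}$ is a factor basis. It differs from $W$ because $|x|=2<4$, and from $\Sigma$ because $|u|=4$. This handles all nine pairs containing $abab$ or $baba$ at once. The remaining six pairs from $\{aabb,abba,baab,bbaa\}$ are covered as follows:
\begin{itemize}
\item $\{a,bb\}$ for $\{aabb,abba\}$ and $\{abba,bbaa\}$;
\item $\{b,aa\}$ for $\{aabb,baab\}$ and $\{baab,bbaa\}$;
\item $\{aa,bb\}$ for $\{aabb,bbaa\}$;
\item $\{ab,ba\}$ for $\{abba,baab\}$.
\end{itemize}
With this table in place, Theorem~\ref{thm:d-p reducible} finishes case i).
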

In contrast to the above proposition, we now explore incidence matrices that represent exclusively irreducible rank-preserving endomorphisms.
\begin{proposition}\label{prop:1diag}
    Let the incidence matrix $P\in\mathbb{N}_0^{2\times 2}$ satisfy $P=\big(\begin{smallmatrix}
        1 & y \\ z & 1
    \end{smallmatrix}\big)$ for $y,z\in\mathbb{N}_0$ where $y=1$ and $z=0$ (or vice versa) or $y>1$ and $z>1$. Then $P$ represents exclusively irreducible rank-preserving endomorphisms.
\end{proposition}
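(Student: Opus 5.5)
The plan is to argue by contradiction at the level of incidence matrices, using Corollary~\ref{cor: multiplication composition}. Suppose some rank-preserving $\varphi$ with $P(\varphi)=P$ is reducible. Then $\varphi=\psi_2\circ\psi_1$ with $\psi_1,\psi_2$ rank-preserving non-automorphisms, so $P=AB$ with $A=P(\psi_1)$ and $B=P(\psi_2)$. Both $A$ and $B$ are $2\times 2$ matrices over $\mathbb{N}_0$. Since $\psi_1,\psi_2$ are rank-preserving, every image is non-empty, so neither matrix has a zero row. Moreover both letters occur in the images, so neither has a zero column.

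The whole proof then reduces to one matrix claim: any such factorisation has $A$ or $B$ a permutation matrix. Granting this, suppose $P(\psi)$ is a permutation matrix. Then each $\psi(a_i)$ has length $1$, and distinct letters have distinct images. Hence $\psi$ is a renaming, that is, an automorphism, which is the desired contradiction.

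\emph{Case $y,z>1$.} Write $1=P_{11}=a_{11}b_{11}+a_{12}b_{21}$. Exactly one of the two products equals $1$ and the other equals $0$; the same holds for $P_{22}=a_{21}b_{12}+a_{22}b_{22}$. I would go through the resulting subcases; the following is typical. Take $a_{11}=b_{11}=1$ and $a_{12}=0$, so row $1$ of $A$ is $(1,0)$ and therefore $b_{12}=P_{12}=y$. Then $P_{22}=a_{21}y+a_{22}b_{22}=1$ with $y>1$ forces $a_{21}=0$, and hence $a_{22}=b_{22}=1$. So $A$ is the identity, and $B=P$.

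The other subcases are handled by the same reasoning, with rows and columns exchanged or the two letters swapped. In particular, the subcase $a_{12}=b_{21}=1$ leads to $A$ (or $B$) being the antidiagonal permutation matrix. In each subcase the hypotheses $y,z>1$ are exactly what rule out a non-trivial split of the off-diagonal entries.

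\emph{Case $y=1,z=0$.} Here $P=\left(\begin{smallmatrix}1&1\\0&1\end{smallmatrix}\right)$; the case $y=0$, $z=1$ follows by transposition, or via Proposition~\ref{prop:matrix transformations}. Row $2$ of $P$ is $(0,1)=a_{21}(b_{11},b_{12})+a_{22}(b_{21},b_{22})$. Row $2$ of $A$ is non-zero, so some $a_{2k}\ge 1$, and the corresponding row of $B$ must be $(0,1)$ with $a_{2k}=1$ and the other term zero. Symmetrically, column $1$ of $P$ is $(1,0)^T$, which forces a column of $A$ to equal $(1,0)^T$ with the matching entry of $B$ equal to $1$.

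Combining these constraints with $\det A\det B=\det P=1$ leaves a small finite list. Using $\det A,\det B\in\{\pm1\}$ and the no-zero-row and no-zero-column conditions, I expect every possibility to have $A$ or $B$ a permutation matrix. This bookkeeping, making sure no subcase escapes, is the main obstacle. I would organise it by which row of $B$ equals $(0,1)$ and which column of $A$ equals $(1,0)^T$, giving four subcases, and check each directly.
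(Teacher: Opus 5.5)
Your proposal is correct and takes essentially the paper's approach: by Corollary~\ref{cor: multiplication composition}, a reducible $\varphi$ would give $P=AB$ with $A,B\in\mathbb{N}_0^{2\times 2}$ free of zero rows and columns and neither a permutation matrix, and a case analysis on the entries rules this out. The bookkeeping you flag for $y=1$, $z=0$ closes without determinants: if the row of $B$ equal to $(0,1)$ and the column of $A$ equal to $(1,0)^T$ have the same index $k$, then $a_{2k}$ must be both $1$ and $0$; in the two remaining subcases $P_{12}=1$ becomes $a_{12}+b_{12}=1$ or $a_{11}+b_{22}=1$, so one factor is $I$ or the antidiagonal permutation matrix.
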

We have seen that factorisation of an incidence matrix is necessary for reducibility of its associated endomorphisms. Thus, we investigate when a matrix with non-negative integer entries can be expressed as the product of two matrices of the same form. This problem has been investigated in its own right, without the connection to word homomorphisms, in a series of papers by Baeth \cite{Bae22,Bae19,Bae11,Bae15} and Heilbrunn \cite{Hei23}. In these papers, the authors provide necessary conditions for non-negative integer matrices to be reducible. We convert one of these results into a proposition for endomorphisms.\par
Let $\mathbf{a}=(a_1, a_2, \dots, a_n)$ and $\mathbf{b}=(b_1, b_2, \dots, b_n)$ be rows (resp. columns) of a matrix $P$. We say that $\mathbf{a}$ \emph{dominates} $\mathbf{b}$ if $a_i\geq b_i$ for all $a_i\in\mathbf{a}$ and $b_i\in\mathbf{b}$.
\begin{proposition} \label{prop: gcd or minmax}
    Let the incidence matrix $P\in\mathbb{N}_0^{2\times 2}$ satisfy $P=\big(\begin{smallmatrix} \alpha & \beta\\
            \gamma & \delta
    \end{smallmatrix} \big)$,
    such that 
    \begin{enumerate}[i)]
        \item there exists a pair $(i,j)\in \{(\alpha, \beta), (\alpha, \gamma), (\delta, \beta), (\delta, \gamma)\}$ where $\gcd(i,j)\neq 1$, or
        \item a row (resp. column) of $P$ dominates the other row (resp. column) of $P$.
    \end{enumerate}
    Then $P$ does not represent exclusively irreducible rank-preserving endomorphisms.
\end{proposition}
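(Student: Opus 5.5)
To show that $P$ does not represent exclusively irreducible rank-preserving endomorphisms, it suffices in each case to exhibit one rank-preserving $\varphi:\{a,b\}^+\rightarrow\{a,b\}^+$ with $P(\varphi)=P$ that is reducible. By Theorem~\ref{thm:d-p reducible}, reducibility follows once we find a nontrivial factor basis. Alternatively, we can build $\varphi=\psi_2\circ\psi_1$ directly from non-automorphic rank-preserving $\psi_1,\psi_2$ and use Corollary~\ref{cor: multiplication composition} to read off $P(\varphi)=P(\psi_1)P(\psi_2)$. By Proposition~\ref{prop:matrix transformations}, swapping rows and/or columns corresponds to renaming letters in the domain or codomain. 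This lets us reduce the four pairs in (i) to one normal form, and the row and column cases in (ii) to one normal form each. Throughout, we take $P$ to have no zero row or column, since otherwise no rank-preserving endomorphism has it.

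\textbf{Case (i).} Take the pair $(\alpha,\beta)$, the first row, with $d=\gcd(\alpha,\beta)>1$; the other pairs follow by the swaps above. Write $\alpha=d\alpha'$ and $\beta=d\beta'$. Choose
\[
\varphi(a)=(a^{\alpha'}b^{\beta'})^d,
\]
and choose $\varphi(b)$ with $\gamma$ letters $a$ and $\delta$ letters $b$, arranged so that $\varphi(a)$ and $\varphi(b)$ are not powers of a common word. Then $V=\{a^{\alpha'}b^{\beta'},\varphi(b)\}$ generates both images. If $\varphi$ is rank-preserving, $V$ is nontrivial: it differs from $\Sigma$ because $|a^{\alpha'}b^{\beta'}|\geq 2$, and from $W$ because $d>1$. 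The delicate point is guaranteeing rank preservation when, say, $(\gamma,\delta)$ is proportional to $(\alpha',\beta')$, so that the Parikh vectors do not separate the images. Here I would place the letters of $\varphi(b)$ non-periodically, for example $\varphi(b)=b^{\delta}a^{\gamma}$ or a variant, and check that $\varphi(a)^n\neq\varphi(b)^m$ by comparing prefixes. A degenerate sub-case is $\gamma+\delta=1$; then take $\varphi(b)$ a single letter, and $\{a^{\alpha'}b^{\beta'},\varphi(b)\}$ is still nontrivial. The column pairs $(\alpha,\gamma)$ and $(\delta,\gamma)$ are more awkward, since the gcd is spread across two images and a common-root construction no longer works. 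There I would instead factor through a letter power: take $\psi_2(a)=a^{d}$ and $\psi_2(b)=b$, so $P(\psi_2)=\mathrm{diag}(d,1)$. We then need $P=P(\psi_1)\,\mathrm{diag}(d,1)$, which requires the first column of $P$ to be divisible by $d$, and this is exactly $\gcd(\alpha,\gamma)=d$. Choose any rank-preserving $\psi_1$ with the quotient matrix, for instance with images $a^{\alpha/d}b^{\beta}$ and $b^{\delta}a^{\gamma/d}$ suitably arranged; then $\psi_1$ and $\psi_2$ are non-automorphisms, and $\varphi=\psi_2\circ\psi_1$ is rank-preserving if $\psi_1$ is. This is cleaner, and by transposition (reversal or letter swaps) it also covers the row pairs, so I will use it uniformly.

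\textbf{Case (ii).} Suppose the first row dominates the second. Write $P=\left(\begin{smallmatrix}1&1\\0&1\end{smallmatrix}\right)Q$, where $Q$ has rows $(\alpha-\gamma,\beta-\delta)$ and $(\gamma,\delta)$, which are non-negative. Realise the left matrix by $\psi_1(a)=ab$, $\psi_1(b)=b$, and realise $Q$ by some $\psi_2$ with $\psi_2(a)$ arranged so that $\psi_2(b)$ is a prefix-free-compatible word. Then $\varphi=\psi_2\circ\psi_1$ has matrix $P(\psi_1)P(\psi_2)=P$. Column domination is handled symmetrically, with the elementary matrix on the right: $\psi_2(a)=a$, $\psi_2(b)=ab$. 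The main obstacle is ensuring that $Q$ has no zero row; in particular, if the rows of $P$ are equal, $Q$'s first row vanishes. If rows are equal, then each entry of a row shares... this gives $\gcd$ conditions only sometimes, so I would instead use $\varphi(a)=ab\cdot u$-style constructions, or when rows are equal realise $P$ via $\varphi(a)=w$, $\varphi(b)=w^r$-type pairs. For example, $\left(\begin{smallmatrix}1&1\\1&1\end{smallmatrix}\right)$ is the matrix of $\psi_2\circ\psi_1$ with $\psi_1(a)=ab$, $\psi_1(b)=ba$, up to choosing $\psi_2$ trivially. The equal-rows case needs a separate ad hoc check, and I expect it, together with verifying rank preservation of the chosen $\psi$'s, to be the genuine work.
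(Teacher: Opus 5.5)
\textbf{Verdict: genuine gap.} Your route is the one the paper intends. You peel a factor $\mathrm{diag}(d,1)$ or $I+E_{ij}$ off $P$, which are the Baeth-style matrix factorisations the paper says it converts. You then realise both factors by rank-preserving endomorphisms and compose, as in the example after the proposition. The gap is a step you assert but never check: that the cofactor is realised by a \emph{non-automorphism}, i.e.\ is not a permutation matrix, and in (ii) has no zero row. It cannot be closed, because these are exactly the cases where the statement is false.

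\begin{itemize}
\item $P=\left(\begin{smallmatrix}1&1\\0&1\end{smallmatrix}\right)$: the first row dominates the second and your $Q$ is $I$. The only endomorphisms with this matrix are $a\mapsto ab$ or $ba$, $b\mapsto b$. Any factor basis must contain $b$ and then either $a$ or $\varphi(a)$, so it is $\Sigma$ or $W$. Theorem~\ref{thm:d-p reducible} then makes $\varphi$ irreducible, which is what the paper's own Proposition~\ref{prop:1diag} (with $y=1$, $z=0$) asserts.
\item $P=\left(\begin{smallmatrix}1&1\\1&0\end{smallmatrix}\right)$: here $Q$ is a transposition, and the same argument shows every realisation is irreducible.
\item Your open equal-rows case $P=\left(\begin{smallmatrix}1&1\\1&1\end{smallmatrix}\right)$: its only rank-preserving realisations are $a\mapsto ab$, $b\mapsto ba$ and the swapped version. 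A factor basis containing $ab$ but not both letters is forced to be $\{ab,ba\}$, so these are irreducible. Your remark that this matrix comes from $\psi_1(a)=ab$, $\psi_1(b)=ba$ ``with $\psi_2$ chosen trivially'' is a trivial factorisation and shows nothing.
\item If zero entries count in (i), so that $\gcd(p,0)=p$: then $\mathrm{diag}(p,1)$ with $p$ prime is a further counterexample. Your claim $|a^{\alpha'}b^{\beta'}|\ge 2$ fails there, $V=\Sigma$, and $a\mapsto a^p$, $b\mapsto b$ is irreducible.
\end{itemize}

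\textbf{What is provable.} The proposition needs these exceptions removed (up to row and column swaps): the four $0/1$ matrices with three ones, the all-ones matrix, and monomial matrices whose nonzero entries are a prime and $1$. With that hypothesis your framework does give a proof once the loose ends are tied.
\begin{itemize}
\item Over $\{a,b\}$, rank-preserving means the two images do not commute, which is injectivity. Hence composites of rank-preserving endomorphisms are rank-preserving.
\item Every $2\times 2$ matrix with no zero row or column is realised injectively. If its rows are proportional, all entries are positive, and $a^{x}b^{y}$, $b^{y'}a^{x'}$ work.
\item In (i), take $d$ to be a prime divisor of the gcd. Then the cofactor is a permutation matrix only in the monomial exception.
\item Equal rows $(x,y)$ with $x+y\ge 3$ factor as $\left(\begin{smallmatrix}1&1\\1&1\end{smallmatrix}\right)B$. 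Take $B$ with rows $(x-1,y)$ and $(1,0)$, or $(1,y-1)$ and $(0,1)$ if $x=1$, and realise the left factor by $a\mapsto ab$, $b\mapsto ba$.
\end{itemize}

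\textbf{A separate error.} Your claim that the column construction covers the row pairs ``by transposition (reversal or letter swaps)'' is wrong. We have $P(\varphi^r)=P(\varphi)$, and renamings only permute rows or columns, so nothing transposes $P$. The row case needs the diagonal factor applied first, $\psi_1(a)=a^d$, $\psi_1(b)=b$, giving $\varphi(a)=\psi_2(a)^d$; this is your first, common-root construction. Note also that $(\delta,\gamma)$ is the second row: the column pairs are $(\alpha,\gamma)$ and $(\delta,\beta)$.
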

The following example illustrates Proposition \ref{prop: gcd or minmax}.
\begin{example}
    Let $P=\big( \begin{smallmatrix}
        2 & 1 \\ 4 & 7\end{smallmatrix}\big)=\left(\begin{smallmatrix}
            1 & 0 \\ 1 & 1
        \end{smallmatrix}\right)\left(\begin{smallmatrix}
            2 & 1 \\ 2 & 6
        \end{smallmatrix}\right)$ be an incidence matrix. Note that $P$ satisfies the second condition of Proposition \ref{prop: gcd or minmax}. Hence, there exists a reducible rank-preserving endomorphism $\varphi:\{a,b\}^+\rightarrow\{a,b\}^+$ such that $P(\varphi)=P$. Indeed, for  $\varphi(a)=a^2b, \varphi(b)=a^2ba^2b^6$ and, for rank-preserving endomorphisms $\psi_1, \psi_2:\{a,b\}^+\rightarrow\{a,b\}^+$ defined as $\psi_1(a)=a, \psi_1(b)=ab, \psi_2(a)=a^2b, \psi_2(b)=aab^6$, we can see that $\varphi=\psi_2\circ \psi_1$. \hfill $\Diamond$
\end{example}
Hence, Proposition \ref{prop: gcd or minmax} implies that matrices need to be ``non-coprime'' and ``non-dominated'', i.e. not satisfy both conditions i) and ii) in Proposition \ref{prop: gcd or minmax}, if they represent exclusively irreducible rank-preserving endomorphisms.\par
Together, these results therefore establish a link between the structure of an incidence matrix and the \mbox{(ir-)reducibility} of its corresponding rank-preserving endomorphisms, as well as the benefits and limitations of this approach.
\section{Conclusion}
In this paper, after noting that rank-reducing endomorphisms can always be nontrivially factorised, we have investigated the (ir-)reducibility of rank-preserving endomorphisms in the endomorphism monoid of the free semigroup.  In Theorem \ref{thm:d-p reducible}, we have characterised \emph{reducibility} using factor bases (given in Definition \ref{def:factor basis}) as a key tool. In Theorem \ref{thm:char_factor_new} we have continued to characterise when, given endomorphisms $\mu$ and $\varphi$, the endomorphism $\mu$ is a \emph{factor} of $\varphi$. Furthermore, we have shown that the factorisation of a rank-preserving endomorphism into irreducible components is not unique, and how we can identify which factor bases correspond to the most derived factorisation. In Section \ref{sec:uniqueness}, we have given some necessary and sufficient conditions for a large class of rank-preserving binary endomorphisms to have a unique factorisation. In Section \ref{sec:Incidence matrices}, we have used incidence matrices to investigate the factorisation of rank-preserving endomorphisms, extending results already given for non-negative integer matrices. The most immediate open problems are:
\begin{enumerate}[i)]
    \item determining characteristic conditions for rank-preserving endomorphisms to have a unique factorisation into irreducible components,
    \item extending the investigations of uniqueness of factorisations and incidence matrices to larger alphabet sizes,
    \item establishing the computational complexity of the decision procedure arising from the proof of Theorem \ref{thm:char_factor_new}.
\end{enumerate}
\bibliographystyle{eptcs}
\bibliography{generic}

\end{document}